\documentclass[11pt]{article}
\usepackage[T1]{fontenc}
\usepackage{lmodern}
\usepackage[a4paper,margin=28mm]{geometry}
\usepackage{amsmath,amssymb,amsthm}
\usepackage{booktabs}
\usepackage{graphicx}
\usepackage[colorlinks=true,linkcolor=blue,citecolor=blue,urlcolor=blue]{hyperref}
\usepackage{microtype}

\makeatletter
\renewcommand{\@seccntformat}[1]{\csname the#1\endcsname.\quad}
\makeatother

\newtheorem{theorem}{Theorem}[section]
\newtheorem{lemma}[theorem]{Lemma}
\newtheorem{proposition}[theorem]{Proposition}
\newtheorem{corollary}[theorem]{Corollary}
\newcommand{\CC}{\mathbb C}
\newcommand{\doi}[1]{\href{https://doi.org/#1}{\nolinkurl{#1}}}
\newcommand{\rank}{\operatorname{rank}}
\newcommand{\degen}{\unrhd}
\newcommand{\cmm}[3]{\langle #1,#2,#3\rangle}

\title{Rectangular matrix multiplication from polynomial products}
\author{Przemysław Uznański\\[2pt]\small Pathway\\[2pt]
\small\href{mailto:przemek@pathway.com}{\texttt{przemek@pathway.com}}}
\date{October 9, 2026}

\begin{document}
\maketitle

\begin{abstract}
In this note, we extend the matrix-multiplication analysis in~\cite{openai-2026-matrix-nine-fourths} to
rectangular products and prove
\[
 \omega(1,k,1)\le
 \begin{cases}
 2,&0\le k\le \frac{1}{2},\\
 1+k+\frac{1}{4k},&k\ge \frac{1}{2}.
 \end{cases}
\]
In particular, $\omega(1,\frac{1}{2},1)=2$ and the dual exponent satisfies
$\alpha\ge \frac{1}{2}$. The proof uses explicit degenerations of convolution tensors,
multinomial counting, and Strassen's spectral theorem.
An angular constraint on three dot-product parameters gives both the
rectangular curve and a bound for $\omega(a,b,c)$, with exact exponent
$a+b$ whenever $a\ge b\ge c\ge0$ and $c\le\frac{ab}{a+b}$.
As an application, Zwick's algorithm
\cite{zwick-2002-bridging-sets} for all-pairs shortest paths in directed
unweighted graphs runs in $O(n^{2.5})$ time.
Combining the rectangular bound with the $(\min,+)$-product improvement
of Alman and Vassilevska Williams~\cite{alman-williams-2026} further gives
$O(n^{2.4999})$ running time.
\end{abstract}

\section{Introduction}\label{sec:statement}

\noindent\textbf{Disclaimer.} This note, its proofs, analysis and writeup
were obtained fully through OpenAI Codex (GPT-6.1). The author has read and analyzed this note in full
and takes full responsibility for its contents, and takes no credit for this work.

The initial motivation for this note was to explore whether the methods
of the groundbreaking matrix-multiplication result
in~\cite{openai-2026-matrix-nine-fourths} could be pushed to analyze
rectangular matrix products. We also aim to gather useful bounds and
their algorithmic consequences in one place. Rephrasing the proofs in
a form accessible to the theoretical computer science community has
value in its own right.

\paragraph{Complexity convention.}
Throughout this note, all complexity bounds suppress a $+o(1)$ term
in the exponent: $O(n^c)$ denotes $n^{c+o(1)}$. All shape parameters
are fixed.

Rectangular matrix multiplication is a basic ingredient in algorithms
for all-pairs problems, including shortest paths, lowest common ancestors,
and Hamming distance. Its exponents depend on the shape of the product;
an improvement for square matrices need not give a comparable
improvement for every rectangular shape.
The analysis in~\cite{openai-2026-matrix-nine-fourths} establishes the square bound
$\omega\le2.25$ using a shared-leg entropy inequality and degenerations
of convolution tensors. We extend that analysis to
rectangular products by retaining the orientation information lost
under full leg symmetrization.

We work over $\CC$. The resulting exponent bounds hold over every
field of characteristic zero by invariance of asymptotic tensor rank
under field extension~\cite[Theorem~3.10]{strassen-1988-asymptotic-spectrum}.

For fixed nonnegative real $u,v,w$, let $\omega(u,v,w)$ be the infimum
of the exponents $e$ for which multiplying an
$\lceil n^u\rceil\times\lceil n^v\rceil$ matrix by a
$\lceil n^v\rceil\times\lceil n^w\rceil$ matrix uses
$O(n^e)$ arithmetic operations, with the convention above.
Write $\omega(k)=\omega(1,k,1)$ and
$\alpha=\sup\{k\ge0:\omega(k)=2\}$.

We first state the bound for general matrix shapes, with the exponents
sorted in decreasing order. We interpret the threshold $\frac{ab}{a+b}$
as zero when $a=b=0$.

\begin{theorem}\label{cor:general-exponents}
For every fixed $a\ge b\ge c\ge0$,
\begin{equation}\label{eq:general-exponents}
 \omega(a,b,c)\le
 \begin{cases}
 a+b,&c\le\dfrac{ab}{a+b},\\[4pt]
 \dfrac{a+b+c}{2}
 +\dfrac{1}{4}\left(\dfrac{ab}{c}+\dfrac{ac}{b}+\dfrac{bc}{a}\right),
 &c>\dfrac{ab}{a+b}.
 \end{cases}
\end{equation}
When $c\le\frac{ab}{a+b}$, equality holds by the size lower bound.
\end{theorem}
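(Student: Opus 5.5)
The identity $\frac{a+b+c}{2}+\frac14\bigl(\frac{ab}{c}+\frac{ac}{b}+\frac{bc}{a}\bigr)=\frac14(x+y+z)^2$, where
\[
x=\sqrt{ab/c},\qquad y=\sqrt{ac/b},\qquad z=\sqrt{bc/a},
\]
organizes the plan. These satisfy $xy=a$, $xz=b$, $yz=c$, so the second branch of \eqref{eq:general-exponents} is $\bigl(\frac{x+y+z}{2}\bigr)^2$. This suggests deriving the branch from a family of degenerations parametrized by three "leg lengths" (equivalently three vectors whose pairwise dot products are $a,b,c$), with cost the squared half-perimeter. I therefore first prove the second branch directly and then obtain the first branch by monotonicity.

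\emph{Step 1 (the construction).} Start from the cyclic convolution tensor $\sum_{i+j+k\equiv0 \bmod N}x_iy_jz_k$. Over $\CC$ the discrete Fourier transform diagonalizes it, so it has rank $N$. Take a high tensor power and restrict to multinomial type classes. The indices $x$, $y$, $z$ are split into coordinate blocks of sizes proportional to the three leg parameters. A $\cmm{n^a}{n^b}{n^c}$ block structure is then read off from pairs of blocks: the shared block between two legs indexes the contracted dimension. The counts of the type classes are computed by multinomial estimates, and the resulting partial degeneration is controlled by the degenerations of convolution tensors already used in~\cite{openai-2026-matrix-nine-fourths}.

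\emph{Step 2 (from support to exponent).} Apply Strassen's spectral theorem~\cite{strassen-1988-asymptotic-spectrum} to compare the asymptotic rank of the power, which is at most $N^{\text{power}}$, with the size of the embedded matrix multiplication tensor. Tuning $N$ against the block sizes, the leg lengths enter through an angular constraint: the three dot-product parameters must be realizable by unit vectors, which is a triangle-type inequality. Optimizing over the angles should produce the exponent $\frac14(x+y+z)^2$. This realizability holds exactly when $x,y,z$ are in the admissible range, which I expect to be equivalent to $c\ge\frac{ab}{a+b}$.

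\emph{Step 3 (first branch and equality).} For $c\le\frac{ab}{a+b}$, use monotonicity $\omega(a,b,c)\le\omega(a,b,c')$ with $c'=\frac{ab}{a+b}$. This is valid because padding the matrix with zeros only enlarges the problem. Substituting $c'=\frac{ab}{s}$ with $s=a+b$ gives
\[
\frac{ab}{c'}=s,\qquad \frac{ac'}{b}+\frac{bc'}{a}=\frac{a^2+b^2}{s},
\]
so the second-branch value equals $\frac s2+\frac{ab}{2s}+\frac s4+\frac{a^2+b^2}{4s}=s$. Hence $\omega(a,b,c)\le a+b$. Note that $c'\le b$ holds since $a\ge0$, so the sorted hypothesis $a\ge b\ge c'$ is preserved. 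The degenerate case $a=b=0$ is trivial.

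Equality follows from the size lower bound: the input $\lceil n^a\rceil\times\lceil n^b\rceil$ matrix must be read, giving $\omega(a,b,c)\ge a+b$, and $a+b$ dominates the other two sizes $a+c$ and $b+c$ because $a\ge b\ge c$.

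\emph{Main obstacle.} The hard part is Step 2: carrying the orientation information through the type-class restriction so that the three legs retain distinct lengths, instead of being symmetrized as in the square analysis. The difficulty is verifying that the entropy and spectral bookkeeping yields exactly $\frac14(x+y+z)^2$, with the feasibility region matching $c>\frac{ab}{a+b}$. My first attempt would be to write the cost as $|u+v+w|^2/4$ for vectors $u,v,w$ with $u\cdot v=a$, $u\cdot w=b$, $v\cdot w=c$, and minimize this over realizable configurations. In that formulation the collinear (angle-zero) case should be the one giving $(x+y+z)^2/4$.
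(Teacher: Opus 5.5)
\textbf{Verdict: genuine gap.} Steps~1--2, which are supposed to prove the second branch, contain no argument. The parts that are right are these:
\begin{itemize}
\item your identity $\frac{a+b+c}{2}+\frac14\bigl(\frac{ab}{c}+\frac{ac}{b}+\frac{bc}{a}\bigr)=\frac14(x+y+z)^2$;
\item Step~3, the reduction of $c\le\frac{ab}{a+b}$ to $c_0=\frac{ab}{a+b}$ together with the check that the second-branch value at $c_0$ is $a+b$. This is the paper's last step, done there on the objective $ap_1+bp_2+cp_3$ rather than on $\omega$;
\item the equality claim, which follows from the size lower bound as you say.
\end{itemize}
You should also treat $b=0<a$ separately: there $c_0=0$ and $ab/c_0$ is undefined, though $\omega(a,0,0)\le a$ is trivial.

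\textbf{Step~1 restates the problem.} It specifies no blocks and checks no support. As you note, the DFT turns the cyclic convolution tensor into the unit tensor $N\,xyz$. So asking for a degeneration of its powers to $\cmm{n^a}{n^b}{n^c}$ is just the border-rank bound you want to prove, not a route to it.

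\textbf{Step~2 does not supply the missing content.} If such a degeneration existed, monotonicity of asymptotic rank would already finish, and Strassen's theorem would play no role. The ``dot-product parameters'' and their ``angular constraint'' are never defined or derived. The optimization producing $\frac14(x+y+z)^2$ is only expected.

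\textbf{The missing idea: the paper argues on the dual side.} It never builds a matrix-multiplication degeneration.
\begin{itemize}
\item For each tensor character $\lambda$ it sets $p_i=\log_2\lambda(B_i(2))\in[0,1]$, so that $\lambda(\cmm{A}{B}{C})=A^{p_Y}B^{p_Z}C^{p_X}$.
\item The substantive input is Proposition~\ref{prop:angular-cap}: $\sum_i\arccos\sqrt{p_i}\ge\pi/2$ for \emph{every} character. It is proved from the convolution splittings, multinomial counting, rescaling and common slopes.
\item Write $p_i=\cos^2\theta_i$; one may assume $\sum_i\theta_i=\pi/2$. Put $\beta_i=2\theta_i$ and use weights $(u,v,w)=(z,y,x)$, so that $a=vw$, $b=wu$, $c=uv$. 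Wolstenholme's inequality gives $2(a\cos\beta_1+b\cos\beta_2+c\cos\beta_3)\le x^2+y^2+z^2$.
\item Since $p_i=(1+\cos\beta_i)/2$, this becomes exactly $ap_1+bp_2+cp_3\le\frac14(x+y+z)^2$.
\item Only then is the spectral theorem used, in the direction ``every character value is bounded, hence asymptotic rank is bounded''. Lemma~\ref{lem:cost-conversion} then converts the rank bound into an algorithm.
\end{itemize}

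Your triangle intuition does have a precise counterpart. Since $x$ is the largest of $x,y,z$ when $a\ge b\ge c$, the condition $c\ge\frac{ab}{a+b}$ is equivalent to $x\le y+z$. That is, $x,y,z$ are side lengths of a (possibly degenerate) triangle. This is exactly when Wolstenholme's bound is attained over the angular constraint set, with the $\beta_i$ the angles of that triangle. At $c=c_0$ the triangle is flat, giving $p_3=0$, $p_1=p_2=1$ and value $a+b$. But this appears as the equality case of an inequality about character parameters, not as a feasibility condition for a construction.
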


Applying this bound to the sorted dimensions of $(1,k,1)$ gives
the rectangular curve.

\begin{corollary}\label{thm:curve}
For every fixed real $k\ge0$,
\begin{equation}\label{eq:curve}
 \omega(k)\le
 \begin{cases}
 2,&0\le k\le \frac{1}{2},\\[2pt]
 1+k+\dfrac1{4k},&k\ge\frac{1}{2}.
 \end{cases}
\end{equation}
Consequently $\omega(k)=2$ for $0\le k\le\frac{1}{2}$.
\end{corollary}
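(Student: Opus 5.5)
The corollary should follow from Theorem~\ref{cor:general-exponents} once we use the permutation symmetry of rectangular exponents. The standard fact $\omega(u,v,w)=\omega(\pi(u,v,w))$ for every permutation $\pi$ comes from transposition and the cyclic symmetry of the matrix multiplication tensor $\cmm{n^u}{n^v}{n^w}$: tensor rank, and therefore the exponent, is invariant under permuting the three legs. So $\omega(1,k,1)$ equals $\omega$ of the sorted triple, and we split into the cases $k\le 1$ and $k\ge 1$.

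\textbf{Case $0\le k\le 1$.} The sorted triple is $(a,b,c)=(1,1,k)$, and the threshold is $\frac{ab}{a+b}=\frac12$.
\begin{itemize}
\item If $k\le\frac12$, the first branch gives $\omega(k)\le a+b=2$.
\item If $\frac12<k\le1$, the second branch gives
\[
 \frac{2+k}{2}+\frac14\Bigl(\frac1k+k+k\Bigr)=1+\frac k2+\frac1{4k}+\frac k2=1+k+\frac1{4k}.
\]
\end{itemize}

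\textbf{Case $k\ge1$.} The sorted triple is $(a,b,c)=(k,1,1)$, and the threshold is $\frac{k}{k+1}<1=c$, so we are in the second branch. It gives
\[
 \frac{k+2}{2}+\frac14\Bigl(k+k+\frac1k\Bigr)=1+\frac k2+\frac k2+\frac1{4k}=1+k+\frac1{4k}.
\]
This is the same formula as in the case $\frac12<k\le1$. Since $\frac12<1$, the two cases cover $k\ge\frac12$ consistently. At $k=\frac12$ the two branches of \eqref{eq:curve} agree, because $1+\frac12+\frac12=2$.

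\textbf{Equality for $k\le\frac12$.} The upper bound $\omega(k)\le 2$ is already shown. For the matching lower bound, the output matrix has $n^2$ entries, so $\omega(k)\ge 2$; this is the size bound already noted in the theorem's equality clause.

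\textbf{Main obstacle.} There is no substantive obstacle: all the depth sits in Theorem~\ref{cor:general-exponents}. The only point needing care is justifying the permutation invariance of $\omega(u,v,w)$. I would cite it as standard: $\rank\cmm{x}{y}{z}$ is invariant under permuting $(x,y,z)$, and $\omega(u,v,w)$ is determined by the asymptotic rank of these tensors.
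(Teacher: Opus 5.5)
Your proposal is correct and follows the paper's route: apply Theorem~\ref{cor:general-exponents} to the sorted triple ($(1,1,k)$ or $(k,1,1)$), check the threshold $\frac{ab}{a+b}$, simplify to $1+k+\frac{1}{4k}$, and use the output-size lower bound for equality. The paper leaves the reordering step you flag implicit; besides the standard permutation symmetry of $\omega$ you cite, it also follows because the angular constraint is symmetric in $p_X,p_Y,p_Z$, so the character bound and Lemma~\ref{lem:cost-conversion} apply to any assignment of the sorted weights to the legs.
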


Corollary~\ref{thm:curve} gives exponent two for inner dimension
up to $n^{0.5}$, and an upper bound within $\frac{1}{4k}$ of the input-size
lower bound $1+k$ for $k\ge1$.

\begin{figure}[tbp]
 \centering
 \includegraphics[width=0.86\linewidth]{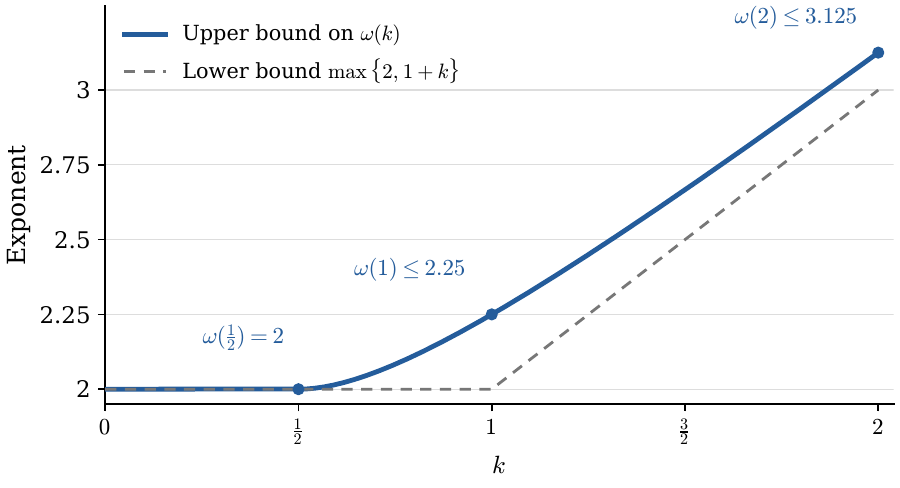}
 \caption{The upper bound on $\omega(k)$ from Corollary~\ref{thm:curve} for
 $0\le k\le2$, together with the size lower bound
 $\max\{2,1+k\}$. The bounds coincide for
 $0\le k\le\frac{1}{2}$.}
 \label{fig:rectangular-curve}
\end{figure}

The proof has two parts. First, we list explicit ways to split and
combine polynomial products and apply a tensor character to them.
A character assigns numerical sizes that add on direct sums,
multiply on tensor products, and cannot increase under restriction.
Multinomial counting gives concavity and growth inequalities for these
sizes. Rescaling, common slopes, and affine bounds
then give $\sum_i\arccos\sqrt{p_i}\ge\pi/2$ for the character's three
dot-product parameters.

Second, Wolstenholme's inequality bounds weighted sums of these parameters.
We apply Strassen's spectral
theorem: the largest character value equals asymptotic tensor rank.
We state precisely the form of this theorem we use, then show how the
rank bound gives a running-time bound for every fixed matrix shape.

\paragraph{Relation to the preceding analysis.}
The two polynomial splittings in Lemma~\ref{lem:openai-inputs} recap
the proofs of Lemmas~4.1 and~4.2 of~\cite{openai-2026-matrix-nine-fourths}
in coefficient coordinates. The finite separation adapts Proposition~3.1
of that reference. All three constructions are proved in the appendices.
Multinomial counting then gives the required power-sum inequality.
The replication construction, rescaling, and coupling of six
affine offsets supply the rectangular extension.
The weighted bound uses the classical Wolstenholme inequality
\cite{liu-2020-weighted-trigonometric}.
Strassen's spectral theorem is a standard input; a modern exposition is
\cite{wigderson-zuiddam-2025-asymptotic-spectra}. We cite it rather than
reproduce its proof. We make no priority claim for the standard tools or
the finite construction variants. Section~\ref{sec:applications}
substitutes the bounds into known algorithmic tradeoffs.

\section{Polynomial products and finite constructions}\label{sec:constructions}

We begin by describing matrix multiplication and convolution as tensors.
We then build ways to split polynomial products and separate computations
that share variables. These constructions give the inequalities in
Section~\ref{sec:discrete}. A further construction replicates a convolution;
it will let us compare convolutions at different scales in
Section~\ref{sec:homogeneous}.

A tensor is a trilinear form in three groups of variables, called its
\emph{legs}. A \emph{restriction} makes linear substitutions independently
on the three legs. A \emph{degeneration} is a limit of such restrictions:
the substitutions may depend on $\varepsilon\ne0$, and every resulting
coefficient converges as $\varepsilon\to0$. Our constructions first change
bases and then multiply variables by integer powers of $\varepsilon$.
The resulting coefficient array is a polynomial in $\varepsilon$ whose
constant coefficient is the desired tensor.
We write $T\ge S$ for a restriction and $T\degen S$ for a degeneration.
Successive constructions can use a single parameter: replace the earlier
parameter by a sufficiently large power of the later one, so every
earlier error term still has positive total exponent.

The $\rank(T)$ is the least number of products of three linear
forms, one per leg, summing to $T$. The direct sum $S\oplus T$ uses
disjoint variables on all three legs and represents independent computations.
The tensor product $S\otimes T$ pairs their indices, as in a Kronecker
product, and represents their composition on blocks. We write $mT$ for
$m$ independent copies, and $T^{\otimes N}$ for $N$ tensor factors.
Thus $\rank(T)\,xyz\ge T$. All factors used below restrict to $xyz$
by keeping one nonzero term; this is what it means to drop a factor.

The oriented dot products are
\[
 B_X(q)=x\sum_{j=1}^q y_jz_j,
\]
with $B_Y(q),B_Z(q)$ obtained by permuting the legs. They have rank $q$,
and $B_i(q)\otimes B_i(q')$ is $B_i(qq')$ after reindexing. The matrix
multiplication tensor factors as
\begin{equation}\label{eq:mm-factorization}
 \cmm{A}{B}{C}
 =\sum_{u,v,w}x_{uv}y_{vw}z_{wu}
 \cong B_Y(A)\otimes B_Z(B)\otimes B_X(C).
\end{equation}
Thus the three dot-product dimensions specify the matrix shape.

For distinct input legs $i,j$, with remaining output leg $k$, the
\emph{convolution tensor} is
\[
 C_{ij}(r,s)=\sum_{h=0}^{r-1}\sum_{\ell=0}^{s-1}
 x_h^{(i)}x_\ell^{(j)}x_{h+\ell}^{(k)}.
\]
It multiplies two polynomials represented by their coefficient vectors.
Its rank is $r+s-1$: evaluation at $r+s-1$ points and interpolation
give the upper bound, and the independent output coefficients give the
matching matrix-rank lower bound. Decreasing either input length is a
restriction. Also
\begin{equation}\label{eq:boundary-convolutions}
 C_{ij}(1,s)\cong B_i(s),\qquad C_{ij}(r,1)\cong B_j(r).
\end{equation}

Blocks $T_h$ are \emph{matched sectors sharing leg $i$} when
$T=\sum_h T_h$ and the blocks may share variables on leg $i$, but use
pairwise disjoint variables on each other leg. The following constructions
hold for every permutation of the legs.

\begin{lemma}[Finite separation and polynomial splittings]\label{lem:openai-inputs}
\leavevmode
\begin{enumerate}
\item If $T$ has $M$ matched sectors sharing leg $i$, then
\begin{equation}\label{eq:finite-separation}
 (2M)T\degen\bigoplus_{h=1}^M\bigl(T_h\otimes B_i(M)\bigr).
\end{equation}
The sectors on the right are independent on all three legs.
\item For $r\ge1$ and $s\ge2$, $C_{ij}(r,s)\otimes B_i(2)$ degenerates
to two matched sectors sharing leg $i$, namely
$C_{ij}(r,s+1)$ and $C_{ij}(r,s-1)$.
\item For $r,s\ge1$, $C_{ij}(r,3s+r-1)$ degenerates to three matched
sectors sharing leg $i$: two copies of $C_{ij}(r,s)$ and one copy of
$C_{ik}(r,s)$.
\end{enumerate}
\end{lemma}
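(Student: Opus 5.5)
I would prove the three parts separately. Each is a short explicit construction. I would change bases on the legs and multiply variables by integer powers of $\varepsilon$, chosen so that the intended sectors carry total exponent $0$ and every cross term carries a positive exponent. Parts 2 and 3 concern a single convolution $C_{ij}(r,s)$. I would treat them in the language of polynomials: leg $i$ carries $a(t)$ with $\deg a<r$, leg $j$ carries $b(t)$, and leg $k$ pairs with the coefficients of $a(t)b(t)$.

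For part 2, $B_i(2)$ keeps $a$ and doubles legs $j,k$, so the tensor computes $ab_1$ and $ab_2$ with $\deg b_1,\deg b_2<s$. The target is $ap$ with $\deg p\le s$ (length $s+1$) and $aq$ with $\deg q\le s-2$ (length $s-1$), both sharing $a$. Dimension count: $2s=(s+1)+(s-1)$.
\begin{itemize}
\item On leg $j$, substitute $b_1=p_{<s}+\varepsilon\, t\,q$ and $b_2=p_s+\dots-\varepsilon^{-1}q$ (any bijective rearrangement with this shape).
\item On leg $k$, dually pair the output of $ab_1$ with $\varepsilon^{0}$ and the output of $ab_2$ shifted by $t$.
\item The combination $ab_1+t\,ab_2$ then reproduces $ap$ at order $\varepsilon^0$.
\item A second output combination, scaled by $\varepsilon$, isolates $aq$.
\end{itemize}
I would verify by writing the coefficient array as a polynomial in $\varepsilon$ and checking that its constant term is exactly the two sectors, disjoint on legs $j,k$ and sharing leg $i$.

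For part 3, I would split $b$ of length $3s+r-1$ as $b=b_0+t^{s}c+t^{s+r-1}b_1$, with blocks of lengths $s$, $r-1+s$ and $s$; after reindexing, $c$ is the middle block. The outer blocks give two convolutions $ab_0$ and $ab_1$. I would separate their supports by $\varepsilon$-weights on leg $j$ and the matching output coordinates. For the middle block, I would use transposition symmetry of $C_{ij}$: the trilinear form $\sum a_hb_\ell z_{h+\ell}$, read with $z$ as an input and $b$ as the output, is a correlation. After reversing the index order of $a$, this correlation is a convolution on legs $i,k$, that is, $C_{ik}(r,s)$. The extra $r-1$ length of the middle block is exactly what lets the full window of $z$ coefficients of length $s$ be correlated against $a$ without boundary loss. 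The $\varepsilon$-exponents would be arranged linearly in position so that interactions between distinct blocks vanish in the limit.

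Part 1 is the one I expect to be the main obstacle. Here $2M$ copies of $T$ must produce $M$ sectors, each with its own copy of the leg-$i$ variables and an $M$-fold dot product on legs $j,k$. I would index the copies by $t\in\{1,\dots,2M\}$ and substitute as follows:
\begin{itemize}
\item the leg-$i$ variables of copy $t$ become $\sum_h \varepsilon^{f(h,t)}x^{(h)}$;
\item the leg-$j$ variables of sector $h$ in copy $t$ are scaled by $\varepsilon^{g(h,t)}$, routed to dot-product index $m=t-h$ when $0\le t-h<M$ and zeroed otherwise;
\item the leg-$k$ variables of sector $h$ in copy $t$ are scaled by $\varepsilon^{-f(h,t)-g(h,t)}$.
\end{itemize}
The requirement is $f(h',t)-f(h,t)>0$ for $h'\ne h$ whenever sector $h$ is active in copy $t$. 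I would first try quadratic exponents of the form $f(h,t)=(h-\phi(t))^2$ for a suitable center $\phi(t)$, which is the standard trick from the finite separation in the cited reference. I would adjust the window and $g$ so that each sector is active in exactly $M$ copies, which is why $2M$ copies are needed. The delicate point is choosing $f$ so that every cross term is strictly positive while each sector still uses $M$ distinct copies; I would settle this by a case check on the sign of $h'-h$.
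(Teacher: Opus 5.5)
\textbf{The gap is in Part~1: the scheme cannot work as set up.}

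In copy $t$, the cross term joining $x^{(h')}$ to the sector-$h$ variables has weight $f(h',t)-f(h,t)$. Your requirement therefore says that every sector active in copy $t$ is the strict minimiser of $f(\cdot,t)$. There is only one strict minimiser, so whatever $f$ and $g$ are, each copy can serve at most one sector.

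But $M$ sectors, each active in $M$ of the $2M$ copies, give $M/2$ active sectors per copy on average. So for $M\ge3$ some copy must serve two sectors. Your window $0\le t-h<M$ already makes all $M$ sectors active in copy $t=M$.

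Your substitutions are monomial, and the pair $(h,m)$ arises only in copy $t=h+m$, so no other copy can cancel the offending cross terms. A routing of this kind needs one copy per pair $(h,m)$, i.e.\ $M^2$ copies. That is just the trivial restriction $M^2T\ge\bigoplus_hT_h\otimes B_i(M)$, and it is useless downstream. In the proof of Lemma~\ref{lem:power-sum} it yields only $\lambda(T)^N\ge M^{p_i-1}\prod_h\lambda(T_h)^{n_h}$ with $p_i\le1$, so the multinomial sum never appears. The copy count must be linear in $M$.

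\textbf{The missing idea is cancellation across copies.} The paper fixes a primitive $2M$-th root of unity $\zeta$. In copy $\ell$ it substitutes
$x_a\mapsto\sum_g\zeta^{\ell g}X_{ag}$,
$y_{hb}\mapsto\sum_u\zeta^{\ell(u-h)}Y_{hbu}$, and
$z_{hc}\mapsto\frac1{2M}\sum_v\zeta^{-\ell v}Z_{hcv}$,
with $g,u,v\in\{1,\dots,M\}$. Summing the copies multiplies each coefficient by $\frac1{2M}\sum_\ell\zeta^{\ell(g+u-h-v)}$. Since $|g+u-h-v|\le2(M-1)<2M$, this keeps exactly the terms with $g+u=h+v$.

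Only after this filter is a quadratic weight consistent. Give $X_{ag}$, $Y_{hbu}$, $Z_{hcv}$ the weights $g^2$, $2hu-h^2$, $-2hv$. These add up to $(g-h)^2\ge0$ on every surviving term. The weight-zero terms ($g=h$, hence $u=v$) form $\bigoplus_hT_h\otimes B_i(M)$. Every copy feeds every sector, and the unwanted terms die by cancellation plus the square, not by a per-copy positivity condition.

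\textbf{Parts~2 and~3 follow the paper's constructions, but fix these slips.}

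In Part~2, both $q$-terms must carry $\varepsilon^{-1}$, and the top term is $p_st^{s-1}$. Take $b_1=p_{<s}-\varepsilon^{-1}tq$ and $b_2=p_st^{s-1}+\varepsilon^{-1}q$; then $b_1+tb_2=p$ exactly. Your mixed powers instead leave a divergent multiple of $\varepsilon^{-1}atq$ in the first output. The second output coordinate is $\varepsilon\bigl(ab_2-[t^{r+s-1}](ap)\,t^{r+s-2}\bigr)=aq+O(\varepsilon)$, which has the correct length $r+s-2$.

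In Part~3, the right block is $t^{2s+r-1}b_1$, and the output windows are $[0,s+r-2]$, $[s+r-1,2s+r-2]$ and $[2s+r-1,3s+2r-3]$. It suffices to scale the middle input block by $\varepsilon$ and the middle output window by $\varepsilon^{-1}$, because the outer input blocks never reach the middle window.
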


Appendices~\ref{app:determinant-filtration} and~\ref{app:tripling} give
the polynomial splittings in coefficient coordinates.
Appendix~\ref{app:finite-separation} proves
\eqref{eq:finite-separation} using a finite geometric sum and a nonnegative
square. The next construction splits a convolution into repeated
independent copies of a smaller convolution.

\begin{lemma}[Replicating a convolution]\label{lem:replication}
For all positive integers $m,r,s$,
\begin{equation}\label{eq:replication}
 C_{ij}(mr,ms)\degen mC_{ij}(r,s).
\end{equation}
\end{lemma}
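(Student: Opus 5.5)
The plan is to build the degeneration by hand, in the monomial-scaling style of Lemma~\ref{lem:openai-inputs}: multiply each variable of $C_{ij}(mr,ms)$ by an integer power of $\varepsilon$. The scaling should keep every term of $m$ disjoint copies of $C_{ij}(r,s)$ at exponent $0$ and push every other term to a positive exponent.

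\textbf{Step 1 (block layout).} I will give copy $c\in\{0,\dots,m-1\}$ the consecutive input blocks
\[
 x^{(i)}_{cr+h},\quad 0\le h<r,\qquad\qquad x^{(j)}_{cs+\ell},\quad 0\le \ell<s .
\]
A term pairing copy $c$ on leg $i$ with copy $c'$ on leg $j$ has output index $n=cr+c's+t$ with $t=h+\ell\in[0,r+s-2]$. On the diagonal $c=c'$ this becomes $n=c(r+s)+t$. So the diagonal outputs of distinct copies lie in the disjoint windows $W_c=[c(r+s),\,c(r+s)+r+s-2]$. Restricted to one window, the diagonal terms of copy $c$ are exactly $C_{ij}(r,s)$ after reindexing $t\mapsto n$. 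The off-diagonal terms $c\ne c'$ are the ones to be killed.

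\textbf{Step 2 (weights).} I will scale $x^{(i)}_{cr+h}$ by $\varepsilon^{f(c,h)}$, $x^{(j)}_{c's+\ell}$ by $\varepsilon^{g(c',\ell)}$, and $x^{(k)}_n$ by $\varepsilon^{-e(n)}$. Two conditions are needed:
\begin{itemize}
\item[(a)] $f(c,h)+g(c,\ell)=e(c(r+s)+h+\ell)$ for every diagonal term;
\item[(b)] $f(c,h)+g(c',\ell)>e(cr+c's+h+\ell)$ whenever $c\ne c'$.
\end{itemize}
Output positions outside $\bigcup_c W_c$ only ever meet off-diagonal terms. I will set their $z$-variables to zero, which is a restriction.

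My first attempt is a block-quadratic penalty $f=g=2Ac^2$ with $e=4Ab^2$ on $W_b$. This fails: when $s$ is small relative to $r$, an off-diagonal term such as $(c,c')=(1,0)$ can land in $W_1$, and then $2A<4A$. So the weights must also see the position inside a block.

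The version I will try is a quadratic in the actual index. I regard copy $c$ on leg $i$ as occupying the real points $p=c+h/r$, and on leg $j$ the points $q=c'+\ell/s$. The natural ``midpoint'' of an output index $n=cr+c's+t$ is then approximately $(rp+sq)/(r+s)$. Take exponents proportional to $r p^2$, $s q^2$, and $(r+s)\bigl(n/(r+s)\bigr)^2$ respectively, each plus a linear-in-index correction. Linear-in-index corrections cancel exactly on every term, since the output index is the sum of the input indices. By the weighted convexity identity
\[
 rp^2+sq^2-(r+s)\Bigl(\frac{rp+sq}{r+s}\Bigr)^2=\frac{rs}{r+s}(p-q)^2 ,
\]
this choice makes the exponent of a term vanish when $p=q$ and grow like $(p-q)^2$ otherwise. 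Off-diagonal terms have $|p-q|$ bounded below by roughly $1-\max(1/r,1/s)$.

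\textbf{Step 3 (discretization).} The main obstacle is making this exact. Within a diagonal block the points $p=c+h/r$ and $q=c+\ell/s$ differ, so the exact quadratic does not vanish on all diagonal terms; it only does so up to an $O(1)$ error. My fix is to multiply the whole quadratic by a large integer $A$ to make the inter-copy gap $\Theta(A)$. I then replace the within-block parts of the exponents by $e(n):=f(c,h)+g(c,\ell)$ on $W_c$. This is well defined if $f(c,\cdot)$ and $g(c,\cdot)$ are chosen linear in $h,\ell$ with the same slope, so that the sum depends only on $h+\ell$. Thus within a block I use a common linear function, with the block-level quadratic carried by the constant terms, and I verify (b) by comparing the bounded linear within-block perturbation against the gap of order $A$. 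Clearing denominators by multiplying by $r s(r+s)$ gives integer exponents. The constant coefficient in $\varepsilon$ is then exactly $\bigoplus_{c<m}C_{ij}(r,s)=mC_{ij}(r,s)$, giving \eqref{eq:replication}.

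If this separation estimate turns out messy, my fallback is a two-step argument. First degenerate $C_{ij}(mr,ms)$ to the product $C_{ij}(m,m)\otimes C_{ij}(r,s)$ by the same carry-killing weights. Then degenerate $C_{ij}(m,m)$ to $m$ copies of $xyz$ via the scaling $x_h\mapsto\varepsilon^{2h^2}$, $y_\ell\mapsto\varepsilon^{2\ell^2}$, $z_n\mapsto\varepsilon^{-n^2}$. This scaling leaves exponent $(h-\ell)^2$ and so keeps exactly the diagonal $h=\ell$.
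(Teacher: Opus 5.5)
\textbf{There is a genuine gap in Step 3.} You never exhibit weights satisfying (b), and the mechanism you describe cannot produce them.

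First, the premise from Step 2 is false. An off-diagonal term can have $p-q=1/s$ (take $c=c'+1$, $h=0$, $\ell=s-1$), while diagonal terms reach $|p-q|=1-1/\max\{r,s\}$. For $r,s\ge3$ these ranges overlap. So the exact quadratic does not separate kept from killed terms, and its $O(1)$ discretization error is of the same order as any separation; multiplying by $A$ rescales both.

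Second, the parametrization itself forces constraints you did not account for. Write $f(c,h)=\alpha_c+\beta_ch$ and $g(c,\ell)=\gamma_c+\beta_c\ell$, and assume $s\ge2$. Consider the off-diagonal terms $x_{br-1}y_{bs+t+1}$, whose output lies in $W_b$ for $0\le t\le s-2$, and $x_{br}y_{(b-1)s+\ell}$, whose output lies in $W_{b-1}$ for $0\le\ell\le s-2$. Together they force
\[
 r\beta_{b-1}<\alpha_b-\alpha_{b-1}<(r-1)\beta_{b-1}+\beta_b ,
\]
and the same with $\gamma$ and $r\leftrightarrow s$ when $r\ge2$. So the within-block slopes must strictly increase from block to block. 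They also pin each increment of the block constants to a window of width $\beta_b-\beta_{b-1}$. A block-level gap of order $A$ set against bounded slopes violates the right-hand inequality. The within-block part is therefore not a perturbation; it carries the growth, and (b) cannot be verified the way you propose.

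\textbf{The fallback is impossible, not merely messy.} Degeneration cannot increase the rank of the flattening that groups the output leg against the two inputs. The tensor $C_{ij}(m,m)\otimes C_{ij}(r,s)$ has $(2m-1)(r+s-1)$ independent output variables, whereas $C_{ij}(mr,ms)$ has only $m(r+s)-1$. The excess is $(m-1)(r+s-2)>0$ unless $m=1$ or $r=s=1$. Carries cannot be killed. The paper folds them instead: it expands the inputs in powers of $t^m-1$ to reach $C_{ij}(r,s)$ tensored with cyclic convolution of length $m$. Evaluation at the $m$-th roots of unity then splits that into $m$ copies.

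\textbf{Your monomial route can still be completed.} Let $f$ and $g$ be convex and piecewise linear on both legs. Give them increment $2b$ between consecutive indices of block $b$, and increment $2b-1$ from the last index of block $b-1$ to the first index of block $b$. The slope $2b$ is the derivative of $X^2/r$, respectively $Y^2/s$, at the start of block $b$, so this is the correct discretization of your quadratic. For $n$ in the windows, put $e(n)=\min_{X+Y=n}\bigl(f(X)+g(Y)\bigr)$. For $n\in W_b$, the map $X\mapsto f(X)+g(n-X)$ is convex. Its increments vanish exactly between the diagonal pairs of copy $b$, are at most $-1$ just before them, and are at least $1$ just after them. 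Hence the diagonal pairs are precisely the minimizers, which gives (a) and (b) at once. This yields a combinatorial degeneration valid over any field, a genuinely different argument from the paper's algebraic one.
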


The construction combines polynomial division with the standard evaluation
description of cyclic convolution
\cite[Theorem~5 and Section~5.1]{ju-solomonik-2020-convolution}.
We give the full proof in Appendix~\ref{app:replication}.

\section{Tensor characters and counting}\label{sec:discrete}

To use the preceding constructions, we assign numerical values to tensors
using tensor characters. The splittings and separation give concavity
and tripling inequalities for their convolution values, which we rescale
in the next section. We also identify the three parameters that determine
their matrix-multiplication values and state Strassen's theorem: bounding
every character value will give an asymptotic rank bound in
Section~\ref{sec:exponents}.

A \emph{tensor character} is a function $\lambda$ from tensors to
nonnegative real numbers satisfying
\begin{equation}\label{eq:character-rules}
\begin{aligned}
 \lambda(S\oplus T)&=\lambda(S)+\lambda(T),&
 \lambda(S\otimes T)&=\lambda(S)\lambda(T),\\
 \lambda(xyz)&=1,&
 S\ge T&\ \Longrightarrow\ \lambda(S)\ge\lambda(T).
\end{aligned}
\end{equation}
These rules express addition of independent computations, multiplication
of tensor factors, and monotonicity under linear substitutions. Matrix
rank after grouping one leg against the other two is an example.
Keeping one nonzero term and using $\rank(T)\,xyz\ge T$ gives
$1\le\lambda(T)\le\rank(T)$ for every nonzero $T$.

\begin{lemma}[Interpolation of a degeneration]\label{lem:degeneration-interpolation}
Suppose a degeneration $T\degen S$ produces a tensor $P(\varepsilon)$
whose coefficients are polynomials of degree at most $L$. Then, for
every integer $N\ge1$,
\begin{equation}\label{eq:degeneration-interpolation}
 (NL+1)T^{\otimes N}\ge S^{\otimes N}.
\end{equation}
Consequently, every tensor character satisfies $\lambda(S)\le\lambda(T)$.
\end{lemma}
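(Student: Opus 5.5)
The plan is the standard interpolation argument of Bini and Strassen, adapted to tensor powers. Write the degeneration as
\[
P(\varepsilon)=\sum_{d=0}^{L}\varepsilon^d P_d,\qquad P_0=S .
\]
Here $P(\varepsilon)$ is obtained from $T$ by substitutions $A(\varepsilon),B(\varepsilon),C(\varepsilon)$ on the three legs. Their entries are Laurent polynomials in $\varepsilon$, after the base change and the scaling by integer powers of $\varepsilon$. I first normalize so that the substitution matrices are genuine polynomials in $\varepsilon$. The cleanest way is to multiply the $x$-leg substitution by $\varepsilon^{e}$ for large enough $e$, so that every entry is polynomial. The family is then
\[
\varepsilon^{e}P(\varepsilon)=\sum_{d=0}^{L+e}\varepsilon^{d}Q_d,\qquad Q_e=S .
\]
To keep the bound $NL+1$ exactly, I will instead read the hypothesis as saying that $P(\varepsilon)$ is obtained from $T$ by restriction for each fixed nonzero $\varepsilon$. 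Its coefficients are polynomials of degree at most $L$ with constant term $S$; no normalization is needed, since the statement only uses the polynomial form of the result.

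Take the $N$-th tensor power:
\[
P(\varepsilon)^{\otimes N}=\sum_{d=0}^{NL}\varepsilon^d R_d,\qquad R_0=S^{\otimes N}.
\]
For each nonzero $\varepsilon$, $T^{\otimes N}\ge P(\varepsilon)^{\otimes N}$, because tensor products of restrictions are restrictions. Choose $NL+1$ distinct nonzero values $\varepsilon_0,\dots,\varepsilon_{NL}$. Lagrange interpolation, i.e.\ inverting the Vandermonde system, gives scalars $c_t$ with
\[
R_0=\sum_t c_t\,P(\varepsilon_t)^{\otimes N}.
\]
Each summand $c_tP(\varepsilon_t)^{\otimes N}$ is a restriction of $T^{\otimes N}$, absorbing $c_t$ into one leg. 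A sum of $m$ tensors, each a restriction of $T'$, is a restriction of $mT'$: map the $t$-th copy's variables through the $t$-th substitution on each leg and add. Hence $(NL+1)T^{\otimes N}\ge S^{\otimes N}$, which is \eqref{eq:degeneration-interpolation}.

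For the consequence, apply the character rules \eqref{eq:character-rules}:
\[
(NL+1)\lambda(T)^N\ge\lambda(S)^N .
\]
Taking $N$-th roots and letting $N\to\infty$ gives $\lambda(S)\le\lambda(T)$, since $(NL+1)^{1/N}\to1$. The main subtlety is the first step: making sure the degeneration really is a polynomial family of restrictions with $S$ as the constant coefficient, so that negative powers of $\varepsilon$ in the substitutions do not break the restriction claim. This is handled by the paper's convention that the resulting coefficient array is polynomial, so only the evaluated tensors $P(\varepsilon_t)$ are ever used, and each is an honest restriction of $T$. The remaining steps are routine.
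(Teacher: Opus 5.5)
Your proposal is correct and is essentially the paper's proof: tensor the polynomial family so its coefficients have degree at most $NL$, interpolate the constant term $S^{\otimes N}$ from $NL+1$ distinct nonzero evaluations, realize the weighted sum as a restriction of $(NL+1)T^{\otimes N}$ by identifying variables and scaling one leg per copy, then apply a character and take $N$-th roots. The $\varepsilon^{e}$ normalization you sketch and then abandon is not needed, since only the restrictions $P(\varepsilon_t)$ of $T$ ever enter the argument; also, as stated, it would have to be applied to every leg carrying negative powers, not just the $x$-leg.
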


Equation~\eqref{eq:degeneration-interpolation} is the standard interpolation
conversion from degeneration to restriction
\cite[Theorem~8]{christandl-jensen-zuiddam-2018-nonmultiplicativity}, applied
to tensor powers. We include the proof for completeness in
Appendix~\ref{app:degeneration-interpolation}.

The \emph{asymptotic rank} of $T$ is
\[
 \widetilde R(T)=\lim_{N\to\infty}\rank(T^{\otimes N})^{1/N}
 =\inf_{N\ge1}\rank(T^{\otimes N})^{1/N}.
\]
The limit exists because tensor rank is submultiplicative; apply the
subadditive-sequence form of Fekete's lemma to its logarithm.

\begin{theorem}[Strassen's spectral theorem]\label{thm:spectral}
For every nonzero tensor $T$ over $\CC$,
\begin{equation}\label{eq:spectral}
 \widetilde R(T)=\max_{\lambda}\lambda(T),
\end{equation}
where the maximum ranges over all tensor characters.
\end{theorem}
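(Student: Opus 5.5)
The easy inequality comes first. For any tensor character $\lambda$ and any $N\ge1$, the rules \eqref{eq:character-rules} give $\lambda(T)^N=\lambda(T^{\otimes N})\le\rank(T^{\otimes N})$, using $\rank(T^{\otimes N})\,xyz\ge T^{\otimes N}$ and monotonicity. Taking $N$-th roots and letting $N\to\infty$ gives $\lambda(T)\le\widetilde R(T)$. The set of characters, viewed as a subset of $\prod_T[1,\rank T]$ with the product topology, is closed, because each defining rule is a closed condition. It is therefore compact by Tychonoff, and evaluation at $T$ is continuous. So once we show the supremum equals $\widetilde R(T)$, it is attained, and it is a maximum.

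For the reverse inequality I would use Strassen's representation theorem for preordered commutative semirings, in its Kadison--Dubois/Positivstellensatz form. The setup is as follows.
\begin{enumerate}
\item Take isomorphism classes of tensors with $\oplus$ and $\otimes$, preordered by restriction $\ge$.
\item Let $n=n\cdot xyz$ be the natural numbers inside this semiring.
\item Check the hypotheses: $1\le\lambda(T)$ and $\rank(T)\ge T\ge 1$ for nonzero $T$. This makes the preorder Strassen-Archimedean, with $1$ the unit tensor $xyz$, so $n\le n+1$ holds and the order on the naturals is the usual one.
\end{enumerate}
The theorem then says: for nonzero $S,T$, if $\lambda(S)\ge\lambda(T)$ for every monotone semiring homomorphism $\lambda$ to $\mathbb R_{\ge0}$, then $S$ asymptotically restricts to $T$. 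Concretely, $2^{o(N)}S^{\otimes N}\ge T^{\otimes N}$.

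Apply this with $S=\langle r\rangle=r\,xyz$, where $r\ge\max_\lambda\lambda(T)$ is rational. To pass to rational $r$, replace $T$ by $T^{\otimes q}$ and compare with the integer $p$, where $r=p/q$. Then $2^{o(N)}\langle p\rangle^{\otimes N}\ge T^{\otimes qN}$. Since $\langle p\rangle^{\otimes N}=p^N\,xyz$, this yields $\rank(T^{\otimes qN})\le p^N2^{o(N)}$. Hence $\widetilde R(T)\le p^{1/q}$, and $\widetilde R(T)\le r$ after adjusting the choice of $p,q$. Letting $r$ decrease to the maximum gives \eqref{eq:spectral}.

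The main obstacle is the representation theorem itself. I would prove it in three steps.
\begin{enumerate}
\item Suppose $S$ does not asymptotically dominate $T$. Use Zorn's lemma to extend the restriction preorder to a maximal semiring preorder in which $S\not\ge T$ still holds, with the relation closed under the asymptotic operations.
\item Show that maximality forces this preorder to be total.
\item Define $\lambda(U)=\inf\{m/n: n\cdot U\le m\}$, which equals $\sup\{m/n: m\le n\cdot U\}$ by the Archimedean property. Verify that $\lambda$ is additive, multiplicative, monotone and normalized, and that $\lambda(S)<\lambda(T)$.
\end{enumerate}
The delicate points are the totality argument and the multiplicativity of $\lambda$; the rest is bookkeeping. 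Alternatively, since the paper only needs the statement, one may simply cite \cite{strassen-1988-asymptotic-spectrum} or \cite{wigderson-zuiddam-2025-asymptotic-spectra} for this step.
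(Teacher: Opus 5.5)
Your easy inequality is exactly the paper's argument. For the reverse inequality the paper does nothing beyond citing Strassen (Theorem~3.8 of the 1988 paper, or Corollary~3.30 in Wigderson--Zuiddam), so your fallback of citing is precisely the paper's route. The compactness remark and the sketch of the representation theorem are extras.

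One step in your derivation of \eqref{eq:spectral} from the representation theorem is mis-parametrized. From $\lambda(T)\le r=p/q$ you cannot conclude $\lambda(T^{\otimes q})=\lambda(T)^q\le p=\lambda(p\,xyz)$; for instance, $\lambda(T)=3$ with $p/q=6/2$ gives $9>6$. So the hypothesis of the representation theorem is not verified, and the bound $p^{1/q}$ you actually obtain is unrelated to $p/q$. Compare powers instead. Let $\mu=\max_\lambda\lambda(T)\ge1$ and $p=\lceil\mu^q\rceil$. Then every character satisfies $\lambda(T^{\otimes q})\le\mu^q\le p$, so the theorem gives $\rank(T^{\otimes qN})\le 2^{o(N)}p^N$. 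Hence $\widetilde R(T)\le\lceil\mu^q\rceil^{1/q}\le 2^{1/q}\mu$, and letting $q\to\infty$ finishes.

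Two smaller points in your sketch. First, the naturals carry their usual order inside the semiring because restriction cannot increase rank, so $n\,xyz\ge m\,xyz$ forces $n\ge m$; the inequalities you list do not give this. Second, the identity $\inf\{m/n: n\cdot U\le m\}=\sup\{m/n: m\le n\cdot U\}$ comes from totality of the maximal preorder. The Archimedean property only makes both sides finite.
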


We use \cite[Theorem~3.8]{strassen-1988-asymptotic-spectrum}; see also
\cite[Corollary~3.30 of the author version]{wigderson-zuiddam-2025-asymptotic-spectra}.
The easy direction follows from
$\lambda(T)^N=\lambda(T^{\otimes N})\le\rank(T^{\otimes N})$.
The theorem supplies the opposite direction. Thus, to prove an upper
bound on asymptotic rank, it suffices to bound every character value.

Fix an arbitrary character $\lambda$. Its \emph{dot-product parameters}
are $p_i=\log_2\lambda(B_i(2))\in[0,1]$, for $i=X,Y,Z$. They determine
the values at all integer dimensions:
\begin{equation}\label{eq:dot-parameters}
 \lambda(B_i(q))=q^{p_i}\qquad(q\ge1).
\end{equation}
The interval $[0,1]$ follows from $1\le\lambda(B_i(2))\le2$.
Tensor powers turn multiplication of dimensions into multiplication of
character values; comparison with powers of two then gives
\eqref{eq:dot-parameters}. Appendix~\ref{app:counting-rescaling} gives
the details. In particular,
\eqref{eq:mm-factorization} gives
\begin{equation}\label{eq:mm-character}
 \lambda(\cmm{A}{B}{C})=A^{p_Y}B^{p_Z}C^{p_X}.
\end{equation}
Our aim is to constrain these three parameters for every character.
Define its \emph{convolution profiles} by
\begin{equation}\label{eq:discrete-profiles}
 f_{ij}(r,s)=\lambda(C_{ij}(r,s)).
\end{equation}

The next counting argument turns the polynomial splittings into
inequalities for these profiles.

Suppose $T$ splits into $s$ matched sectors $T_1,\ldots,T_s$ sharing
leg $i$. In $T^{\otimes N}$ a sector is specified by a word of $N$
sector labels. Retain on the two nonshared legs the words with $n_h$
occurrences of label $h$, where $\sum_h n_h=N$. There are
\[
 M=\binom{N}{n_1,\ldots,n_s}
\]
matched sectors. For example, multiplicities $(2,1)$ give the words
$112,121,211$. Each sector is a product of the $T_h$ with the prescribed
multiplicities. Apply \eqref{eq:finite-separation}; now all legs
distinguish the words, so their factors can be reordered independently.
This gives
\begin{equation}\label{eq:type-shared}
 (2M)T^{\otimes N}\degen
 M\left(\bigotimes_h T_h^{\otimes n_h}\right)\otimes B_i(M).
\end{equation}

\begin{lemma}[Multinomial power-sum inequality]\label{lem:power-sum}
If $T\degen\sum_{h=1}^s T_h$ is a splitting into matched sectors
sharing leg $i$ and $p_i>0$, then
\begin{equation}\label{eq:power-sum}
 \lambda(T)^{1/p_i}\ge\sum_h\lambda(T_h)^{1/p_i}.
\end{equation}
For fully independent summands, $T\degen\bigoplus_h T_h$ gives
$\lambda(T)\ge\sum_h\lambda(T_h)$ directly by additivity.
\end{lemma}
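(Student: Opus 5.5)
Apply the character to \eqref{eq:type-shared}, take $N$th roots, and let $N\to\infty$. The multiplicities are then optimized so that the multinomial coefficient balances the sector values.

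First, $T\degen\sum_h T_h$ gives $T^{\otimes N}\degen(\sum_h T_h)^{\otimes N}$. Restricting to the words of a fixed type $(n_1,\ldots,n_s)$ on the two nonshared legs gives $M=\binom{N}{n_1,\ldots,n_s}$ matched sectors sharing leg $i$, each isomorphic to $\bigotimes_h T_h^{\otimes n_h}$. Composing degenerations with a single parameter, as explained in Section~\ref{sec:constructions}, and using \eqref{eq:type-shared}, we get
\[
(2M)\,T^{\otimes N}\degen M\Bigl(\bigotimes_h T_h^{\otimes n_h}\Bigr)\otimes B_i(M).
\]
Lemma~\ref{lem:degeneration-interpolation} says characters are monotone under degeneration. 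Additivity, multiplicativity and \eqref{eq:dot-parameters} then give
\[
2M\,\lambda(T)^N\ge M\cdot M^{p_i}\prod_h\lambda(T_h)^{n_h},
\]
that is, $2\lambda(T)^N\ge M^{p_i}\prod_h\lambda(T_h)^{n_h}$.

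Next, choose the type. Set $t_h=\lambda(T_h)^{1/p_i}$, $S=\sum_h t_h$ and $q_h=t_h/S$. Take $n_h=\lfloor q_hN\rfloor$ and adjust the rounding so that $\sum_h n_h=N$. Stirling's formula, or the elementary bound $\binom{N}{n}\ge (N+1)^{-s}\prod_h (N/n_h)^{n_h}$, gives $M\ge (N+1)^{-s}2^{N H(q)+o(N)}$, where $H$ is the entropy. Take $N$th roots and let $N\to\infty$; the factors $2^{1/N}$ and $(N+1)^{-s/N}$ disappear. This yields
\[
\lambda(T)\ge 2^{p_iH(q)}\prod_h\lambda(T_h)^{q_h}=\prod_h\Bigl(\frac{t_h}{q_h}\Bigr)^{p_iq_h}=S^{p_i}.
\]
The last equality holds because $t_h/q_h=S$ for every $h$. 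Raising both sides to the power $1/p_i$ gives \eqref{eq:power-sum}.

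Some details need care. Labels with $\lambda(T_h)$ irrational relative to the lattice are handled by the rounding error, which is $o(N)$ in the exponent; the continuity of $q\mapsto H(q)+\sum_h q_h\log t_h$ controls it. Sectors with $q_h=0$ cannot occur, since $\lambda(T_h)\ge1$ for nonzero $T_h$. The main obstacle is to justify that the restricted power really consists of $M$ matched sectors sharing leg $i$, so that Lemma~\ref{lem:openai-inputs}(1) applies. Distinct words of the same type must use disjoint variables on both nonshared legs, while leg $i$ may be shared. This holds because each nonshared-leg variable of $T^{\otimes N}$ is indexed by a tuple of sector-$h$ variables, which records its word. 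The fully independent case is immediate from additivity and Lemma~\ref{lem:degeneration-interpolation}.
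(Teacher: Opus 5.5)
Your proof is correct. Up to the last step it coincides with the paper's argument: both apply $\lambda$ to \eqref{eq:type-shared} and obtain $2\lambda(T)^N\ge M^{p_i}\prod_h\lambda(T_h)^{n_h}$ for each type $(n_1,\ldots,n_s)$. The two routes differ only in how the types are used.

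\textbf{Your route.} You fix a single type near the dominant distribution $q_h=t_h/S$ and invoke the type-class bound $\binom{N}{n_1,\ldots,n_s}\ge(N+1)^{-s}\prod_h(N/n_h)^{n_h}$. This then requires a rounding step, continuity of the entropy, and the evaluation $\prod_h(t_h/q_h)^{p_iq_h}=S^{p_i}$.

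\textbf{The paper's route.} It rewrites the per-type bound as $M\prod_h a_h^{n_h}\le 2^{1/p_i}\lambda(T)^{N/p_i}$, with $a_h=\lambda(T_h)^{1/p_i}$, and sums it over all of the at most $(N+1)^s$ types. The multinomial theorem then reassembles the left side exactly into $(\sum_h a_h)^N$. No Stirling-type estimate, rounding, or optimization over $q$ is needed, and zero sectors need no separate treatment.

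\textbf{What your route buys.} The same argument, run with an arbitrary target distribution, gives the entropy form $\lambda(T)\ge 2^{p_iH(q)}\prod_h\lambda(T_h)^{q_h}$ for every probability vector $q$. This is equivalent to \eqref{eq:power-sum} through the variational identity $\max_q\{H(q)+\sum_h q_h\log_2 t_h\}=\log_2\sum_h t_h$. It also makes explicit which type carries the bound.
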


The counting idea is to sum over all multiplicities. With
$a_h=\lambda(T_h)^{1/p_i}$, applying the character to
\eqref{eq:type-shared} and using the multinomial theorem gives
\[
 \left(\sum_h a_h\right)^N
 \le (N+1)^s2^{1/p_i}\lambda(T)^{N/p_i}.
\]
The number of multiplicity tuples is polynomial in $N$, so it disappears
when we take $N$-th roots and let $N\to\infty$.
The full calculation is in Appendix~\ref{app:counting-rescaling}.

Rank bounds and restrictions give
\begin{equation}\label{eq:profile-rank}
 1\le f_{ij}(r,s)\le r+s-1,
 \qquad f_{ij}(r,s)=f_{ji}(s,r),
\end{equation}
and coordinatewise monotonicity. The equality merely interchanges the
two polynomial inputs of the same convolution.

\begin{lemma}[Powered concavity and tripling]\label{lem:discrete}
For $p_i>0$, the sequence $s\mapsto f_{ij}(r,s)^{1/p_i}$ is concave,
and for all integers $r,s\ge1$,
\begin{equation}\label{eq:discrete-tripling}
 f_{ij}(r,3s+r-1)^{1/p_i}
 \ge2f_{ij}(r,s)^{1/p_i}+f_{ik}(r,s)^{1/p_i},
\end{equation}
where $i,j,k$ are distinct. For $p_j>0$, the sequence
$r\mapsto f_{ij}(r,s)^{1/p_j}$ is also concave.
\end{lemma}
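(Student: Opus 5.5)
We derive both statements from Lemma~\ref{lem:power-sum}, fed with the constructions of Lemma~\ref{lem:openai-inputs}, by writing everything in terms of the profiles $f_{ij}$ of \eqref{eq:discrete-profiles}. Throughout, $p_i>0$ is used to form the $1/p_i$ powers. The tripling inequality \eqref{eq:discrete-tripling} is the most direct. Part~3 of Lemma~\ref{lem:openai-inputs} says $C_{ij}(r,3s+r-1)$ degenerates to three matched sectors sharing leg $i$: two copies of $C_{ij}(r,s)$ and one of $C_{ik}(r,s)$. Applying \eqref{eq:power-sum} to this splitting gives
\[
f_{ij}(r,3s+r-1)^{1/p_i}\ge 2f_{ij}(r,s)^{1/p_i}+f_{ik}(r,s)^{1/p_i}.
\]

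For concavity in $s$, fix $r$ and $s\ge2$, and write $g(s)=f_{ij}(r,s)^{1/p_i}$. Part~2 gives
\[
C_{ij}(r,s)\otimes B_i(2)\degen C_{ij}(r,s+1)+C_{ij}(r,s-1),
\]
a splitting into matched sectors sharing leg $i$. By multiplicativity and \eqref{eq:dot-parameters}, the left side has character value $f_{ij}(r,s)\cdot 2^{p_i}$. Raising it to the power $1/p_i$ gives $2g(s)$. So \eqref{eq:power-sum} yields $2g(s)\ge g(s+1)+g(s-1)$, which is midpoint concavity of the integer sequence $g$. For a sequence on the integers this is exactly concavity. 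The only boundary issue is $s=1$, which is the first point of the domain and needs no inequality.

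For concavity in $r$ when $p_j>0$, we use the symmetry $f_{ij}(r,s)=f_{ji}(s,r)$ from \eqref{eq:profile-rank}. The construction in part~2 holds for every permutation of legs, so we apply it to $C_{ji}$ with shared leg $j$. This gives concavity of $s'\mapsto f_{ji}(r',s')^{1/p_j}$. Translating back through the symmetry gives concavity of $r\mapsto f_{ij}(r,s)^{1/p_j}$.

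The only delicate point is the step from \eqref{eq:power-sum}, which is stated for $T\degen\sum_h T_h$, to a degeneration of the form $T\otimes B_i(2)$. Since $T\otimes B_i(2)$ is itself a tensor, Lemma~\ref{lem:power-sum} applies to it verbatim, with $\lambda(T\otimes B_i(2))=\lambda(T)2^{p_i}$. We must also check that part~2 produces genuinely matched sectors: they must share only leg $i$ and be disjoint on legs $j$ and $k$. That is exactly the hypothesis of the cited construction, so no further work is needed beyond invoking it.
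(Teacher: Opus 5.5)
Your proposal is correct and is essentially the paper's proof. The three-sector splitting of Lemma~\ref{lem:openai-inputs} fed into Lemma~\ref{lem:power-sum} gives \eqref{eq:discrete-tripling}; the two-sector splitting with $\lambda(C_{ij}(r,s)\otimes B_i(2))^{1/p_i}=2f_{ij}(r,s)^{1/p_i}$ gives concavity in $s$; and input interchange $f_{ij}(r,s)=f_{ji}(s,r)$, with leg $j$ shared, gives concavity in $r$, exactly as in Appendix~\ref{app:counting-rescaling}.
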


For example, the two-sector splitting gives
\[
 2f_{ij}(r,s)^{1/p_i}
 \ge f_{ij}(r,s+1)^{1/p_i}+f_{ij}(r,s-1)^{1/p_i},
\]
which is discrete concavity. The three-sector splitting gives
\eqref{eq:discrete-tripling}; input interchange gives concavity in the
other coordinate. Appendix~\ref{app:counting-rescaling} supplies the proof.

\section{Rescaling the convolution profiles}\label{sec:homogeneous}

We next pass from integer input lengths to real ones, so we can compare
convolution shapes at any scale. Replication from
Section~\ref{sec:constructions} lets us enlarge both inputs together;
the bounds, monotonicity, concavity, and tripling inequalities from the
preceding section control the resulting limit. We obtain continuous
profiles that retain concavity and tripling and scale linearly when both
lengths grow by the same factor.

Applying a character to Lemma~\ref{lem:replication} gives
\begin{equation}\label{eq:profile-replication}
 f_{ij}(mr,ms)\ge m f_{ij}(r,s).
\end{equation}
This makes the rescaled values stable under further replication.
For real $r,s>0$, define
\begin{equation}\label{eq:rescaled-profile}
 g_{ij}(r,s)=\lim_{m\to\infty}
 \frac{f_{ij}(\lceil mr\rceil,\lceil ms\rceil)}{m},
 \qquad m\in\mathbb Z_{>0}.
\end{equation}
The following lemma justifies this limit and lets us work with real
lengths throughout the remaining analysis.

\begin{lemma}[Rescaled profiles]\label{lem:homogeneous}
The limit in \eqref{eq:rescaled-profile} exists. The profiles $g_{ij}$
are continuous, coordinatewise nondecreasing, and satisfy
\begin{equation}\label{eq:limit-rank}
 \min\{r,s\}\le g_{ij}(r,s)\le r+s,\qquad
 g_{ij}(r,s)=g_{ji}(s,r),\qquad
 g_{ij}(tr,ts)=t\,g_{ij}(r,s).
\end{equation}
For $p_i>0$, $g_{ij}^{1/p_i}$ is concave in $s$ and satisfies
\begin{equation}\label{eq:general-tripling}
 g_{ij}(r,3s+r)^{1/p_i}
 \ge2g_{ij}(r,s)^{1/p_i}+g_{ik}(r,s)^{1/p_i}.
\end{equation}
For $p_j>0$, $g_{ij}^{1/p_j}$ is concave in $r$.
All lengths and scale factors here may be arbitrary positive reals.
\end{lemma}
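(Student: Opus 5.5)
The plan is to prove existence of the limit in \eqref{eq:rescaled-profile} from superadditivity-type behaviour, using \eqref{eq:profile-replication} together with coordinatewise monotonicity and the rank bound \eqref{eq:profile-rank} to control rounding. The limit should be approached through divisibility, as a supremum.

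\textbf{Rational points.} Fix rationals $r,s>0$ and a common denominator $d$, so that $dr,ds$ are integers. Put $a_m=f_{ij}(mdr,mds)/m$. By \eqref{eq:profile-replication}, $a_{mm'}\ge a_m$, and by \eqref{eq:profile-rank}, $a_m\le d(r+s)$. Monotonicity and subadditivity of the rank bound give a comparison between nonmultiples: for $n=qm+t$ with $0\le t<m$, we have $f(ndr,nds)\ge f(qmdr,qmds)\ge q f(mdr,mds)$. Hence $\liminf_n a_n\ge a_m$ for every $m$, so $\lim a_n=\sup_m a_m$ exists. Ceilings at a general scale $m$ differ from the multiples of $d$ by a bounded shift. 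By monotonicity they are sandwiched between $f(\lfloor m/d\rfloor dr,\ldots)$ and $f(\lceil m/d\rceil dr,\ldots)$, and both of these, divided by $m$, converge to the same value. This yields $g_{ij}(r,s)$ on rationals.

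\textbf{Homogeneity and bounds.} Homogeneity $g(tr,ts)=tg(r,s)$ for rational $t$ follows directly by reindexing $m$. The upper bound $g\le r+s$ comes from $f\le r+s-1$. For the lower bound $\min\{r,s\}\le g$, restrict to $C_{ij}(n,n)$ and then to $C(1,n)\cong B_i(n)$. That only gives $n^{p_i}$, so it is not enough. The better route is to restrict $C_{ij}(n,n)$ to a direct sum of about $n$ independent $xyz$ terms: choose output coefficients $h+\ell$ along an antidiagonal-free matching, for example the pairs $(h,h)$, whose outputs $2h$ are distinct and whose inputs are distinct. Zeroing the other variables leaves $\bigoplus_h xyz$, so $f(n,n)\ge n$, and monotonicity then gives $\ge\min$. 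The symmetry $g_{ij}(r,s)=g_{ji}(s,r)$ is inherited from $f$, and monotonicity is inherited pointwise.

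\textbf{Concavity and tripling.} For rationals, apply Lemma~\ref{lem:discrete} at scale $m$ to the points $ms$, $m(s\pm\delta)$ with $m\delta$ an integer. Iterating discrete concavity gives midpoint concavity of $s\mapsto f(mr,ms)^{1/p_i}$ over steps of length $m\delta$. Divide by $m^{1/p_i}$ and pass to the limit to get midpoint concavity of $g^{1/p_i}$ on rationals. For tripling, use \eqref{eq:discrete-tripling} with $(mr,ms)$: the left side is $f(mr,3ms+mr-1)$, which monotonicity bounds above by $f(mr,m(3s+r))$. Dividing by $m^{1/p_i}$ and letting $m\to\infty$ gives \eqref{eq:general-tripling}.

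\textbf{Extension to reals.} The hardest step, I expect, is extending to real arguments and showing continuity, since the limit must exist at irrational points and not merely on rationals. On rationals, $g$ is monotone and homogeneous. For $r'\le r\le(1+\eta)r'$ and $s'\le s\le(1+\eta)s'$, this gives
\[
 g(r',s')\le g(r,s)\le(1+\eta)g(r',s'),
\]
which is a uniform Lipschitz-type bound in logarithmic coordinates. At an irrational point, sandwich $\lceil mr\rceil,\lceil ms\rceil$ between $m$-multiples of rational approximations from below and above using monotonicity of $f$. The $\liminf$ and $\limsup$ are then squeezed within a factor $1+\eta$ for every $\eta>0$, so the limit exists, and continuity follows from the same estimate. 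Finally, concavity of $g^{1/p_i}$ (midpoint concavity plus continuity), tripling, homogeneity for real $t$, and the bounds all extend from rationals by continuity.
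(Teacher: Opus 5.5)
\textbf{The gap: the lower bound $g_{ij}(r,s)\ge\min\{r,s\}$.} The restriction you use for this bound does not exist. Keep $x_h,y_h,z_{2h}$ for $0\le h<n$ and zero the rest. This leaves every term $x_hy_\ell z_{h+\ell}$ with $h+\ell$ even, not just the diagonal ones. Already for $n=3$ you keep $x_0y_2z_2+x_1y_1z_2+x_2y_0z_2$, so the result is not $\bigoplus_h xyz$.

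No better choice of pairs can save the idea. A coordinate restriction of $C_{ij}(n,n)$ to $N$ independent terms is a family of triples $(h_t,\ell_t,h_t+\ell_t)$ with $h_a+\ell_b=h_c+\ell_c$ only for $a=b=c$, that is, a tricolored sum-free set. The triangle removal lemma forces $N=o(n)$, which contributes nothing after dividing by the scale. You need a non-coordinate change of basis.

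The paper applies \eqref{eq:profile-replication} with $r=s=1$. Since $C_{ij}(1,1)=xyz$, this gives $f_{ij}(m,m)\ge m$. Monotonicity then gives $f_{ij}(r,s)\ge\min\{r,s\}$ at integer lengths, and hence the bound for $g_{ij}$. Equivalently, reduce the output modulo $t^m-1$ and diagonalize the resulting cyclic convolution by the discrete Fourier transform; this is even a restriction. The bound matters later: positivity of $g_{ij}$ is what gives $L_i>0$ in Lemma~\ref{lem:intercepts}.

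\textbf{The rest of your argument.} With that step replaced, your argument is sound and differs from the paper only in bookkeeping. You prove directly that the ceiling limit exists at rational points, via the Fekete-type estimate $a_n\ge\frac{qm}{n}a_m$. You then pass midpoint concavity and tripling through the limit along one sequence of scales. The paper instead defines $g_{ij}$ on rationals as a supremum over admissible scales and approaches finitely many suprema at one common scale. It identifies the continuous extension with \eqref{eq:rescaled-profile} only at the end. Your multiplicative sandwich at irrational points is the same homogeneity-plus-monotonicity estimate that yields \eqref{eq:profile-continuity}. You omit concavity in $r$ for $p_j>0$. It follows from concavity of $g_{ji}^{1/p_j}$ in its second argument together with the interchange symmetry.
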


The discrete inequalities thus become inequalities about homogeneous
curves. The additive $-1$ in \eqref{eq:discrete-tripling} disappears at
this scale. The proof first compares rational lengths at a common integer
scale; homogeneity and monotonicity then give continuity and the real
extension. Appendix~\ref{app:counting-rescaling} includes that argument
and verifies that rounding the lengths in \eqref{eq:rescaled-profile}
does not change the limit.

\section{Common slopes and the angular constraint}\label{sec:slopes}

With the profile properties from Lemma~\ref{lem:homogeneous}, we can now
relate the three dot-product parameters $p_i\in[0,1]$. Comparing the growth
of differently oriented convolutions leads to the angular constraint in
Proposition~\ref{prop:angular-cap}. This is the only result from this
section needed for the running-time bounds in Section~\ref{sec:exponents}.

Index the three tensor legs by $1,2,3$.
Full proofs of this section's slope and growth
claims are in Appendix~\ref{app:slopes-angular}.

\begin{lemma}[Common slopes and affine bounds]\label{lem:intercepts}
Fix $i$ with $p_i>0$, and let $j,k$ be the other legs. The functions
$P_{ij}(s)=g_{ij}(1,s)^{1/p_i}$ and $P_{ik}(s)=g_{ik}(1,s)^{1/p_i}$
have a common asymptotic slope $L_i>0$. There are offsets satisfying
\begin{equation}\label{eq:intercept-budgets}
 c_{ij},c_{ik}\ge0,\qquad c_{ij}+c_{ik}\le1,
\end{equation}
such that, for all real $s,t>0$,
\[
 P_{ij}(s+t)\ge P_{ij}(s)+L_it,\qquad
 L_i s\le P_{ij}(s)\le L_i(s+c_{ij}),
\]
and likewise for $P_{ik}$.
\end{lemma}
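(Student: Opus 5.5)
We work with $P(s)=P_{ij}(s)=g_{ij}(1,s)^{1/p_i}$ for fixed $r=1$ and use the properties from Lemma~\ref{lem:homogeneous}.

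\emph{Step 1: growth and the slope $L_{ij}$.} By Lemma~\ref{lem:homogeneous}, $P$ is concave and nondecreasing on $(0,\infty)$, and $\min\{1,s\}^{1/p_i}\le P(s)\le(1+s)^{1/p_i}$. Concavity gives that $P(s)/s$ is nonincreasing. Let $L_{ij}=\lim_{s\to\infty}P(s)/s$, which is also the limit of the right derivatives. Concavity then gives both $P(s+t)\ge P(s)+L_{ij}t$ and $P(s)\ge L_{ij}s$, the latter by letting the base point tend to $0$ and using $P\ge0$. Positivity and finiteness of $L_{ij}$ require the growth $g_{ij}(1,s)\asymp s$. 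Homogeneity gives $g_{ij}(1,s)=s\,g_{ij}(1/s,1)$, and the bound $\min\le g\le$ sum shows $g_{ij}(1,s)\in[1,1+s]$. So linear growth is not automatic; instead I will extract it from tripling. Iterating \eqref{eq:general-tripling} with $r=1$ along $s\mapsto 3s+1$ gives $P(3s+1)\ge 2P(s)+P_{ik}(s)$. If both $P_{ij}$ and $P_{ik}$ were $o(s)$, concavity together with this superadditive-type recursion yields a contradiction: iterating from $s_0$ gives $P(s_n)\gtrsim 3^n$ while $s_n\sim 3^n s_0$, which forces $P_{ij}+P_{ik}$ to grow at least linearly. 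I expect some case care here, possibly using the dot-product boundary values $g_{ij}(1,s)\ge$ the rescaled $B_j$ value via \eqref{eq:boundary-convolutions}.

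\emph{Step 2: common slope.} Divide the tripling inequality by $s$ and let $s\to\infty$. Since $P_{ij}(3s+1)/s\to3L_{ij}$, this gives $3L_{ij}\ge2L_{ij}+L_{ik}$, that is, $L_{ij}\ge L_{ik}$. The same inequality with $j,k$ swapped (Lemma~\ref{lem:homogeneous} holds for every permutation of the legs) gives $L_{ik}\ge L_{ij}$. Hence $L_{ij}=L_{ik}=L_i$. Step 1 shows that at least one of them is positive, so the common value $L_i$ is positive.

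\emph{Step 3: offsets.} Since $P(s)-L_is$ is concave and nondecreasing (its derivative is at least $0$), it converges to a limit $L_ic_{ij}\in[0,\infty]$ as $s\to\infty$, and then $P(s)\le L_i(s+c_{ij})$. Finiteness of $c_{ij}$ and the budget $c_{ij}+c_{ik}\le1$ come from tripling at finite scale. Write $P_{ij}(s)=L_i(s+c_{ij})-\delta_{ij}(s)$, where $\delta_{ij}\ge0$ decreases to $0$. Substituting into the tripling inequality gives
\[
L_i(3s+1+c_{ij})-\delta_{ij}(3s+1)\ge 2L_i(s+c_{ij})-2\delta_{ij}(s)+L_i(s+c_{ik})-\delta_{ik}(s).
\]
Letting $s\to\infty$ yields $1+c_{ij}\ge 2c_{ij}+c_{ik}$, i.e.\ $c_{ij}+c_{ik}\le1$. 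To see that $c_{ij}<\infty$ to begin with, use the same inequality: the left side minus $3L_is$ stays bounded by $L_i(1+c_{ij})$ only if $c_{ij}$ is finite. Alternatively, rewrite the inequality with $\Delta(s)=P_{ij}(s)-L_is$ to get $\Delta(3s+1)+L_i\ge2\Delta(s)+\Delta_{ik}(s)$, which bounds $\Delta$ since it is monotone with a finite limit forced by this recursion. This last point, ruling out $c=\infty$ rigorously, is where I expect the most care to be needed, along with the positivity in Step 1.
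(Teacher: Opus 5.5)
\textbf{Where the proposal breaks.} Your Steps 1 and 2 follow the paper's route. Concavity gives the slopes $L_{ij},L_{ik}$; their finiteness is automatic, since $P(s)/s$ is nonincreasing. The coupled tripling inequalities force the two slopes to agree. Positivity comes from iterating the \emph{summed} inequality $R(3s+1)\ge3R(s)$ for $R=P_{ij}+P_{ik}$. Iterating the $P_{ij}$ inequality alone gives only growth $2^n$, not $3^n$.

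The genuine gap is in Step 3, at the point you flagged: you never prove $c_{ij}<\infty$. Your first argument is circular, because the bound by $L_i(1+c_{ij})$ is finite only if $c_{ij}$ already is. Your second argument also fails. From $\Delta_{ij}(3s+1)+L_i\ge2\Delta_{ij}(s)+\Delta_{ik}(s)\ge2\Delta_{ij}(s)$ you can only conclude the following: once $\Delta_{ij}$ exceeds $L_i$, it grows at least like $2^m$ along $s_m\approx3^m s$, that is, like $s^{\log_3 2}$. That is compatible with $\Delta_{ij}(s)=o(s)$, so this recursion bounds $\Delta_{ij}$ from below at large arguments, never from above. Without finiteness, writing $P_{ij}=L_i(s+c_{ij})-\delta_{ij}$ and passing to $1+c_{ij}\ge2c_{ij}+c_{ik}$ is an $\infty-\infty$ manipulation. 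So neither the affine upper bound nor the budget \eqref{eq:intercept-budgets} is established.

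\textbf{The fix.} The missing step is the quantitative form of the iteration you already used in Step 1. Iterate $R(3s+1)\ge3R(s)$ from an arbitrary $s$ along $s_m=3^m(s+\tfrac12)-\tfrac12$, so that $s_{m+1}=3s_m+1$. Then $R(s)\le3^{-m}R(s_m)\to2L_i(s+\tfrac12)=L_i(2s+1)$, which is the paper's \eqref{eq:sum-majorant}.

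Equivalently, $\Delta_{ij}(s)+\Delta_{ik}(s)\le L_i$ for every $s>0$. In your notation, summing the two $\Delta$-recursions gives $\Delta_R(3s+1)-L_i\ge3(\Delta_R(s)-L_i)$, where $\Delta_R=R-2L_is=o(s)$. Now the factor $3$ matches the growth of the argument, so any positive value of $\Delta_R-L_i$ would force linear growth of $\Delta_R$, a contradiction. Hence $\Delta_R\le L_i$.

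Both residuals are nonnegative and nondecreasing, so this single bound gives three things at once. Both offsets are finite, $c_{ij}+c_{ik}\le1$, and $L_i>0$ because $R>0$. Your $\delta$-expansion then becomes unnecessary.
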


Concavity gives each curve a limiting slope. The coupled tripling
inequalities force the two slopes to agree. Writing $R=P_{ij}+P_{ik}$,
they also give $R(3s+1)\ge3R(s)$; iteration yields
$R(s)\le L_i(2s+1)$. Consequently the two curves stay above their
common limiting line, and their normalized vertical offsets are
nondecreasing with sum at most one. Their limits are $c_{ij},c_{ik}$.
This explains both the affine bounds and the constraint on their sum.

\begin{proposition}[Angular constraint]\label{prop:angular-cap}
The dot-product parameters of every tensor character satisfy
\begin{equation}\label{eq:angular-cap}
 \sum_{i=1}^3\arccos\sqrt{p_i}\ge\frac{\pi}{2}.
\end{equation}
\end{proposition}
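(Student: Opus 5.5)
If some $p_i=0$, then $\arccos\sqrt{p_i}=\pi/2$ and \eqref{eq:angular-cap} holds, so I would assume $p_1,p_2,p_3>0$. Lemma~\ref{lem:intercepts} then applies to every leg. It gives slopes $L_1,L_2,L_3>0$ and six offsets $c_{ij}\ge0$ with $c_{ij}+c_{ik}\le1$ for each $i$. The plan is to turn the two descriptions of each convolution profile into constraints linking an ordered pair of legs. Then I would eliminate the slopes, so that only the six offsets remain. Finally, the budget constraints $c_{ij}+c_{ik}\le1$ should force the angular inequality.

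\emph{Step 1 (two views of one profile).} For legs $i\ne j$, combine the symmetry and homogeneity in \eqref{eq:limit-rank}:
\[
 g_{ij}(1,s)=s\,g_{ij}(1/s,1)=s\,g_{ji}(1,1/s).
\]
Lemma~\ref{lem:intercepts} applied at leg $i$ gives
\[
 (L_is)^{p_i}\le g_{ij}(1,s)\le \bigl(L_i(s+c_{ij})\bigr)^{p_i}.
\]
Applied at leg $j$ with $u=1/s$, it gives
\[
 L_j^{p_j}s^{1-p_j}\le g_{ij}(1,s)\le L_j^{p_j}s^{1-p_j}(1+c_{ji}s)^{p_j}.
\]
Crossing the lower bound of each view against the upper bound of the other yields, for all $s>0$,
\[
 \frac{s^{p_i+p_j-1}}{(1+c_{ji}s)^{p_j}}\;\le\;\frac{L_j^{p_j}}{L_i^{p_i}}\;\le\;\frac{(s+c_{ij})^{p_i}}{s^{1-p_j}}.
\]
If $p_i+p_j\le1$, then $\arccos\sqrt{p_i}+\arccos\sqrt{p_j}\ge\pi/2$ already, so \eqref{eq:angular-cap} holds and that case is done. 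Otherwise both sides are nontrivial. Taking the supremum on the left and the infimum on the right gives an explicit lower bound on a function of $(c_{ij},c_{ji})$ in terms of $(p_i,p_j)$. A more useful form keeps the quantity $\ell_{ij}=p_j\log L_j-p_i\log L_i$ instead of eliminating it, so that one obtains two one-sided bounds on $\ell_{ij}$ in terms of $c_{ij}$ and of $c_{ji}$ separately. The cyclic sum $\ell_{12}+\ell_{23}+\ell_{31}=0$ then removes the slopes.

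\emph{Step 2 (optimize and combine).} Both extremal problems are one-variable optimizations with interior critical points. For instance, the infimum of $(s+c)^{p_i}s^{p_j-1}$ is attained at $s=c(1-p_j)/(p_i+p_j-1)$. From them I would derive a closed-form inequality
\[
 \Phi_{p_i,p_j}(c_{ij})+\Psi_{p_i,p_j}(c_{ji})\ge0
\]
for each pair, or more precisely a cyclic sum of such terms bounded below via $\ell_{12}+\ell_{23}+\ell_{31}=0$. The proposition then reduces to a finite-dimensional question. Suppose $\sum_i\arccos\sqrt{p_i}<\pi/2$. I would show that no offsets with $c_{ij}\ge0$ and $c_{ij}+c_{ik}\le1$ satisfy all the pairwise constraints. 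The natural route is to substitute $p_i=\cos^2\theta_i$ and guess the extremal offsets. I expect these to be trigonometric expressions such as ratios of $\sin$ and $\cos$ of the $\theta_i$, for which every budget is tight exactly when $\theta_1+\theta_2+\theta_3=\pi/2$. I would then verify monotonicity in the offsets, so that tightening the angles below $\pi/2$ makes the system infeasible.

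\emph{Main obstacle.} I expect the hard step to be Step 2, where six coupled offsets are eliminated. The pairwise inequalities are logarithmic and non-convex, and the budget constraint couples $c_{ij}$ with $c_{ik}$ at each leg. My first attempt would be a Lagrangian or convexity argument in the variables $\log$-offsets, aiming to show that at the optimum all three budgets are tight and all six crossing inequalities are tight. That reduces everything to a single identity, which should be the condition $\sum\theta_i=\pi/2$, recognizable as the tangent-addition identity $\tan\theta_1\tan\theta_2+\tan\theta_2\tan\theta_3+\tan\theta_3\tan\theta_1=1$. If the direct optimization is intractable, the fallback is to guess the extremal configuration from the equality cases and prove the general inequality by a tangent-line or weighted AM--GM bound around it.
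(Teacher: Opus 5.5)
\textbf{There is a genuine gap.} Step~2 cannot be completed at all, not merely with difficulty, because your reduction throws away the information that makes the proposition true. From Lemma~\ref{lem:intercepts} you keep only three things: the sandwich bounds $L_is\le P_{ij}(s)\le L_i(s+c_{ij})$, crossed between the two views of $g_{ij}$; the budgets; and the identity $\ell_{12}+\ell_{23}+\ell_{31}=0$. That system is satisfiable with $\sum_i\arccos\sqrt{p_i}<\pi/2$.

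Take $p_1=p_2=p_3=\frac45$, all six offsets $c_{ij}=\frac12$, and $L_1=L_2=L_3$, so every $\ell_{ij}=0$. Your two crossing inequalities become $s^{3/5}\le(1+s/2)^{4/5}$ and $s^{1/5}\le(s+\frac12)^{4/5}$ for all $s>0$. The worst cases are $s=6$ and $s=\frac16$, where both hold with logarithmic margin $\ln2-\frac35\ln3\approx0.034>0$. Every budget $c_{ij}+c_{ik}\le1$ holds with equality. Yet $\arccos\sqrt{4/5}=\arctan\frac12$, and $3\arctan\frac12\approx1.39<\frac{\pi}{2}$. At this $p$ your sup/inf step only forces $\sqrt{c_{ij}c_{ji}}$ to be at least roughly $0.47$. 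So your expectation that infeasibility of the relaxed system sets in exactly at $\theta_1+\theta_2+\theta_3=\frac{\pi}{2}$ is false, and no Lagrangian or tangent-line argument on these constraints can prove \eqref{eq:angular-cap}.

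\textbf{The missing ingredient} is the increment bound $P_{ij}(s+t)\ge P_{ij}(s)+L_it$ from Lemma~\ref{lem:intercepts}. It is slope information at every point, not just values, and you never use it. The paper's argument runs as follows.
\begin{enumerate}
\item It converts the increment bound into $g_{ij}(r,s+t)/g_{ij}(r,s)\ge\bigl(1+t/(s+c_{ij}r)\bigr)^{p_i}$.
\item It applies the analogous bound in the $r$-direction through the $j$-view.
\item It chains the two increments along the diagonal using $g_{ij}((1+\varepsilon)r,(1+\varepsilon)s)=(1+\varepsilon)g_{ij}(r,s)$. Letting $\varepsilon\downarrow0$ gives, for all $r,s>0$,
\[
 1\ge\frac{p_is}{s+c_{ij}r}+\frac{p_jr}{r+c_{ji}s}.
\]
\item Nonnegativity of the resulting quadratic in $r/s$ yields $\sqrt{c_{ij}c_{ji}}\ge\cos(\theta_i+\theta_j)$. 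At $p=\frac45$ this demands $\frac35$, which excludes your feasible point.
\end{enumerate}
With this pairwise bound the slopes never need to be eliminated, and no optimization is required. AM--GM and the budgets give $\sum_{i<j}2\sqrt{c_{ij}c_{ji}}\,x_ix_j\le\sum_ix_i^2$ for all $x_i\ge0$. The choice $x_i=\cos\theta_i$ contradicts this whenever $\sum_i\theta_i<\pi/2$.
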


\begin{proof}[Proof outline]
If some $p_i=0$, its summand is already $\pi/2$. Assume all three
parameters are positive. For this outline, suppose $g=g_{ij}$ is
differentiable. The increment and upper bounds in
Lemma~\ref{lem:intercepts} give $P'_{ij}(t)\ge L_i$ and
$P_{ij}(t)\le L_i(t+c_{ij})$. Homogeneity gives
$g(r,s)=rP_{ij}(s/r)^{p_i}$, so, also interchanging the inputs,
\[
\begin{aligned}
 \frac{\partial_s g(r,s)}{g(r,s)}
 &=\frac{p_i}{r}\frac{P'_{ij}(s/r)}{P_{ij}(s/r)}
 \ge\frac{p_i}{s+c_{ij}r},\\
 \frac{\partial_r g(r,s)}{g(r,s)}
 &=\frac{p_j}{s}\frac{P'_{ji}(r/s)}{P_{ji}(r/s)}
 \ge\frac{p_j}{r+c_{ji}s}.
\end{aligned}
\]
Differentiating $g(tr,ts)=t\,g(r,s)$ at $t=1$ gives
\[
 r\frac{\partial_r g(r,s)}{g(r,s)}
 +s\frac{\partial_s g(r,s)}{g(r,s)}=1.
\]
Substituting the two lower bounds yields
\begin{equation}\label{eq:ratio-bound}
 1\ge\frac{p_i s}{s+c_{ij}r}+\frac{p_j r}{r+c_{ji}s}
 \qquad(r,s>0).
\end{equation}
Appendix~\ref{app:slopes-angular} obtains the same inequality using
finite increments, without assuming differentiability.
Clearing denominators gives a quadratic in $r/s$. Its nonnegativity
for every positive ratio implies
\begin{equation}\label{eq:intercept-products}
 \sqrt{c_{ij}c_{ji}}\ge
 \left(\sqrt{p_ip_j}-\sqrt{(1-p_i)(1-p_j)}\right)_+,
\end{equation}
where $(z)_+=\max\{z,0\}$.

For nonnegative $x_1,x_2,x_3$, AM--GM and
\eqref{eq:intercept-budgets} give
\begin{equation}\label{eq:angular-quadratic-form}
 \sum_{i<j}2\sqrt{c_{ij}c_{ji}}x_ix_j
 \le\sum_{i<j}(c_{ij}x_i^2+c_{ji}x_j^2)
 \le\sum_i x_i^2.
\end{equation}
Put $\theta_i=\arccos\sqrt{p_i}$ and suppose
$S=\theta_1+\theta_2+\theta_3<\pi/2$. Take $x_i=\cos\theta_i>0$.
Then \eqref{eq:intercept-products} gives
$\sqrt{c_{ij}c_{ji}}\ge\cos(\theta_i+\theta_j)$.
Using the product-to-sum identities, we get
\[
\begin{aligned}
 \cos(\theta_i+\theta_j)x_j+\cos(\theta_i+\theta_k)x_k
 &=\cos\theta_i+
   \frac{\cos(\theta_i+2\theta_j)+\cos(\theta_i+2\theta_k)}{2}\\
 &=x_i+\cos S\cos(\theta_j-\theta_k)>x_i.
\end{aligned}
\]
The extra term is positive: $\cos S>0$ and
$\cos(\theta_j-\theta_k)>0$, since $|\theta_j-\theta_k|\le S<\pi/2$.
Multiplying by $x_i$ and summing contradicts
\eqref{eq:angular-quadratic-form}. Appendix~\ref{app:slopes-angular}
contains the complete proof, including the quadratic calculation.
\end{proof}

\section{From character bounds to running times}\label{sec:exponents}\label{sec:general}

We can now turn the parameter constraint into matrix-multiplication
algorithms. The other ingredients are the matrix-tensor formula
\eqref{eq:mm-character} and Strassen's spectral theorem
(Theorem~\ref{thm:spectral}). For a given matrix shape, we bound a weighted
sum of the three parameters for every character; the theorem then bounds
asymptotic rank. We first show how this rank bound gives a running-time
bound, then apply the angular constraint from
Proposition~\ref{prop:angular-cap} to obtain the general-shape bound and
the rectangular curve.

\begin{lemma}[Character bounds give algorithms]\label{lem:cost-conversion}
Fix $\mathbf h=(h_X,h_Y,h_Z)\ge0$. If the dot-product parameters of
every tensor character satisfy $\sum_i h_ip_i\le E$, then
$\omega(h_Y,h_Z,h_X)\le E$.
\end{lemma}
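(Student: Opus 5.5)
The plan is to reduce the statement to an asymptotic-rank bound for a single rectangular matrix multiplication tensor, and then turn asymptotic rank into an algorithm by the usual recursive bilinear scheme. Fix $n$ and set $A=\lceil n^{h_Y}\rceil$, $B=\lceil n^{h_Z}\rceil$, $C=\lceil n^{h_X}\rceil$. By \eqref{eq:mm-character}, every character satisfies
\[
 \lambda(\cmm{A}{B}{C})=A^{p_Y}B^{p_Z}C^{p_X}.
\]
It is cleaner to work with integer bases, so I would instead fix an integer $q\ge2$ and consider $T_q=\cmm{q^{h_Y}}{q^{h_Z}}{q^{h_X}}$. This requires integer sides, so I would take $q=2^{m}$ and round the exponents $mh_\bullet$ to integers; the rounding loss disappears as $m\to\infty$. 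With sides $2^{a},2^{b},2^{c}$ the character value is $2^{ap_Y+bp_Z+cp_X}\le 2^{m(E+\delta)}$ whenever $|a-mh_Y|,|b-mh_Z|,|c-mh_X|$ are small relative to $m$, using $p_i\in[0,1]$. Theorem~\ref{thm:spectral} then gives $\widetilde R(T)\le 2^{m(E+\delta)}$ for $T=\cmm{2^a}{2^b}{2^c}$.

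Next I convert asymptotic rank into bounded rank of a tensor power. Since $\widetilde R(T)=\inf_N\rank(T^{\otimes N})^{1/N}$, there is an $N$ with $\rank(T^{\otimes N})\le 2^{mN(E+2\delta)}$. Here $T^{\otimes N}=\cmm{2^{aN}}{2^{bN}}{2^{cN}}$, which is a single fixed-size rectangular bilinear algorithm of rank $R_0$. For the running time I use the standard recursion. To multiply $(2^{aN})^\ell\times(2^{bN})^\ell$ by $(2^{bN})^\ell\times(2^{cN})^\ell$ matrices, partition them into blocks and apply the rank-$R_0$ bilinear algorithm recursively. The linear combinations at each level cost $O(\text{block sizes})$ per product, and the total work is $O(R_0^\ell\cdot\text{poly})$. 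More precisely, the recurrence $T(\ell)=R_0\,T(\ell-1)+O(K^{\ell})$ has as its bound $O(\max(R_0,K)^\ell\ell)$, where $K$ is the block count. Since $R_0\ge$ the size of the output and inputs in the relevant sense, this is $O(R_0^\ell)$ up to polylog factors. Arbitrary dimensions $\lceil n^{h_Y}\rceil$ and so on are handled by padding with zeros to the next level $\ell$, with $2^{mN\ell}\approx n$. Padding changes each side by at most a factor $2^{mN\max h}$, which is a constant, so the exponent is $E+2\delta$, and $\delta\to0$ gives $\omega(h_Y,h_Z,h_X)\le E$.

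The step that needs the most care is the cost accounting in the recursion. The bilinear algorithm's additions must not dominate, which requires that the additions per level are geometric with ratio at most $R_0$. I expect this to work because $E\ge$ each pairwise sum needed for the input and output sizes. Indeed, characters such as the flattening ranks give $p$ vectors like $(1,1,0)$, so the hypothesis forces $E\ge h_Y+h_Z$ and so on. Even without this, the additions are bounded by the total input and output size, which is at most $n^{E+o(1)}$. A second point to handle is ensuring the rounding of the exponents keeps the sides integers while respecting the order conventions of $\omega(u,v,w)$. This is routine: the definition uses ceilings, and monotonicity of $\omega$ under padding covers it.
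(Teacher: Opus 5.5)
Your proposal is correct and follows essentially the paper's route: bound every character value of one fixed rectangular tensor uniformly, then use Strassen's theorem and the infimum over tensor powers to get one finite decomposition of rank $R_0$, and recurse on blocks, where the flattening bound ($R_0$ is at least each input and output size) keeps the linear work at $O(\ell R_0^{\ell})$; your base $2^m$ with integer exponents is only a cosmetic variant of the paper's $\lceil N^{h_i}\rceil$. One small fix: round the exponents up, $a=\lceil mh_Y\rceil$ and so on (as the paper's ceilings do), so that with $2^{mN\ell}\ge n$ the padded sides really cover $n^{h_\bullet}$; your aside that the additions are bounded by the total input and output size is not needed and is not accurate by itself, because what controls them is the inequality $K\le R_0$.
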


\begin{proof}
The zero target is trivial. For a fixed integer $N\ge2$, put
$D_i=\lceil N^{h_i}\rceil$ and $M_N=\bigotimes_i B_i(D_i)$. Since
$D_i\le2N^{h_i}$ and $0\le p_i\le1$, every character satisfies
\[
 \lambda(M_N)=\prod_i D_i^{p_i}
 \le8N^{\sum_i h_ip_i}\le8N^E.
\]
Theorem~\ref{thm:spectral} therefore gives
\begin{equation}\label{eq:shape-asymptotic-rank}
 \widetilde R(M_N)\le8N^E.
\end{equation}

Fix any positive exponent slack $\eta$. First choose $N$ so large that
$8<N^{\eta/2}$. The infimum defining $\widetilde R(M_N)$ then gives
a fixed integer $K$ and an exact rank decomposition with
\[
 R:=\rank(M_N^{\otimes K})\le N^{K(E+\eta)}.
\]
This is a finite bilinear algorithm for matrix dimensions
$(D_Y^K,D_Z^K,D_X^K)$. Apply it recursively $\ell$ times to blocks.
By \eqref{eq:mm-factorization}, the resulting dimensions are
$(D_Y^{K\ell},D_Z^{K\ell},D_X^{K\ell})$ and there are $R^\ell$
scalar multiplications.

Each input and output dimension, viewed as a number of scalar entries,
is at most $R$, by the corresponding matrix-rank lower bound. Thus the
linear substitutions and additions over all recursion levels use
$O(\ell R^\ell)$ operations. Given input size $n$, choose
$\ell=\lceil\log_N n/K\rceil$ and pad to these dimensions. They cover
the target shape because $D_i^{K\ell}\ge n^{h_i}$, while
$R^\ell=O(n^{E+\eta})$. This proves the bound for every positive slack,
and hence the lemma under our exponent convention. All choices of $N$,
$K$, and the finite identity precede the limit $n\to\infty$.
\end{proof}

It remains to bound these weighted sums using the angular constraint.

\begin{proof}[Proof of Theorem~\ref{cor:general-exponents}]
Reorder the character's dot-product parameters so that the objective is
$ap_1+bp_2+cp_3$, and write $p_i=\cos^2\theta_i$ with
$\theta_i\in[0,\pi/2]$. By Proposition~\ref{prop:angular-cap}, their
sum is at least $\pi/2$. Decreasing angles to make the sum exactly
$\pi/2$ only increases the objective, so it suffices to bound that case.
If $b=0$, then $c=0$ and the objective is at most $a$.

For $a,b,c>0$, put $\beta_i=2\theta_i$ and
\[
 u=\sqrt{\frac{bc}{a}},\qquad
 v=\sqrt{\frac{ac}{b}},\qquad
 w=\sqrt{\frac{ab}{c}}.
\]
Thus $vw=a$, $wu=b$, $uv=c$, and $\beta_1+\beta_2+\beta_3=\pi$.
Wolstenholme's inequality
\cite[Equation~(1)]{liu-2020-weighted-trigonometric}, applied with weights
$u,v,w$, gives
\[
 2(a\cos\beta_1+b\cos\beta_2+c\cos\beta_3)\le u^2+v^2+w^2.
\]
The inequality also holds at zero angles by continuity.
Since $p_i=(1+\cos\beta_i)/2$, it gives
\[
 ap_1+bp_2+cp_3\le\frac{a+b+c}{2}
 +\frac{1}{4}\left(\frac{ab}{c}+\frac{ac}{b}+\frac{bc}{a}\right).
\]
For $a,b>0$, set $c_0=\frac{ab}{a+b}$. If $0\le c\le c_0$, replace $c$
in the objective by $c_0$, which only increases it. The displayed bound
at $c_0$ equals $a+b$. This proves both branches, including $c=0$.

We have bounded the objective for every tensor character.
Lemma~\ref{lem:cost-conversion}, with target
$\mathbf h=(c,a,b)$, gives \eqref{eq:general-exponents}.
\end{proof}

\begin{proof}[Proof of Corollary~\ref{thm:curve}]
Apply Theorem~\ref{cor:general-exponents} to the sorted dimensions
of $(1,k,1)$. For $0\le k\le1$, the first branch applies exactly when
$k\le\frac{1}{2}$. In the remaining range, both possible orderings give
\[
 \frac{2+k}{2}+\frac{1}{4}\left(2k+\frac1k\right)
 =1+k+\frac1{4k}.
\]
The output-size lower bound $\omega(k)\ge2$ proves equality for
$0\le k\le\frac{1}{2}$ and hence $\alpha\ge\frac{1}{2}$.
\end{proof}

Some values of the bound in Corollary~\ref{thm:curve} are:
\begin{center}
\begin{tabular}{ccccccc}
\toprule
$k$ & $0.5$ & $0.6$ & $0.8$ & $1$ & $1.5$ & $2$\\
\midrule
$\omega(k)\le$ & $2$ & $2.0166\ldots$ & $2.1125$ & $2.25$ & $2.6666\ldots$ & $3.125$\\
\bottomrule
\end{tabular}
\end{center}
The exponent convention gives an $O(n^2)$-time
algorithm at $k=\frac{1}{2}$.

\section{Applications}\label{sec:applications}

Corollary~\ref{thm:curve} and Theorem~\ref{cor:general-exponents} give the following bounds through standard
rectangular-multiplication balances. All bounds use arithmetic
operations, with graph operations and symbol comparisons also charged.

\begin{enumerate}
\item \textbf{One lowest common ancestor per pair in a DAG.}
Grandoni et al.\ \cite[Theorem~4.2 of the published version]
{grandoni-et-al-2021-all-pairs-lca} obtain exponent $1+2x$ from the
balance $\omega(x)=3x$. Corollary~\ref{thm:curve} permits
$x=(1+\sqrt3)/4$, giving
$O(n^{(3+\sqrt3)/2})$ time for finding one lowest common
ancestor for every pair in an $n$-vertex DAG. The exponent is
$(3+\sqrt3)/2\approx2.36603$.

\item \textbf{Exact all-pairs Hamming distance.}
For $n$ vectors of length $n$ over an unrestricted alphabet, the
reduction to dominance products~\cite{labib-uznanski-wolleb-graf-2019}
and Yuster's tradeoff
\cite[Proposition~2.1]{yuster-2009-permutations-dominance} give the balance
$\rho=\omega(4-\rho)$.
Corollary~\ref{thm:curve} gives
$\rho=(13-\sqrt7)/4\approx2.58856$ and running time
$O(n^\rho)$ for computing every exact Hamming distance.

\item \textbf{The standard unweighted APSP balance.}
Zwick's deterministic algorithm for directed unweighted APSP
\cite[Section~6]{zwick-2002-bridging-sets} has exponent $2+\mu$,
where $\omega(\mu)\le1+2\mu$. Here $\omega(\frac{1}{2})=2=1+2(\frac{1}{2})$,
so it gives $O(n^{2.5})$ running time.
Combining the full curve with the deterministic $(\min,+)$-product
algorithm and APSP reduction of Alman and Vassilevska Williams
\cite[Theorems~22 and~33--34]{alman-williams-2026} further gives
$O(n^{2.4999})$ running time.
\item \textbf{Minimum and maximum witnesses.}
For two Boolean $n\times n$ matrices, a minimum (maximum) witness for
entry $(i,j)$ is the least (greatest) index $k$ with $A_{ik}=B_{kj}=1$.
The standard blocking algorithm
\cite[Section~2.1]{cohen-yuster-2014-minimum-witnesses} takes
$O(n^{1-r+\omega(r)}+n^{2+r})$ time.
Taking $r=\frac{1}{2}$ gives $O(n^{2.5})$ time for all witnesses.

\item \textbf{Sparse matrix multiplication.}
For two $n\times n$ matrices with at most $m$ nonzero entries each,
the sparse--dense decomposition of Yuster and Zwick
\cite[Section~3]{yuster-zwick-2005-sparse} gives
$O(m^2/n^r+n^{\omega(r)})$ time for $0\le r\le1$.
Thus $m\le n^{1.25}$ permits $O(n^2)$ time.

\item \textbf{Dynamic matrix inverse and reachability.}
Substituting Theorem~\ref{cor:general-exponents} into van den
Brand--Nanongkai--Saranurak's tradeoff
\cite[Theorem~4.2]{van-den-brand-nanongkai-saranurak-2019}, with
$u=0.50323551$ and $v=0.80667763$ in
$\max\{u+v,\omega(1,u,v)-u,\omega(v)-v\}$,
gives $O(n^{1.30992})$ worst-case time per entry update and query for
maintaining the inverse of a nonsingular $n\times n$ matrix, after
$O(n^{2.25})$ preprocessing.
The associated Monte Carlo reduction gives the same bounds for directed
reachability under edge updates, with $O(1)$ query time for a fixed pair
\cite[Corollary~C.12 and Theorem~C.16]{van-den-brand-nanongkai-saranurak-2019}.
\end{enumerate}


\clearpage
\appendix
\makeatletter
\renewcommand{\@seccntformat}[1]{\appendixname~\csname the#1\endcsname.\quad}
\makeatother
\numberwithin{equation}{section}
\section{Splitting two polynomial products}\label{app:determinant-filtration}

We recap the two-sector construction used in
\cite[Lemma~4.1]{openai-2026-matrix-nine-fourths} as an explicit change of
coordinates for pairs of polynomials.\footnote{This is the degree-one case
of the classical Clebsch--Gordan formula
\cite[Theorem~2.6.3, pp.~56--57 of the author version]{kowalski-2014-representation-theory}.
The coefficient calculation here does not require that formula.}

Fix $r\ge1$ and $s\ge2$. The tensor $C_{XY}(r,s)\otimes B_X(2)$
computes
\[
 (f,(a,b))\longmapsto(fa,fb),\qquad
 \deg f<r,\quad\deg a,\deg b<s.
\]
Thus tensoring with $B_X(2)$ means multiplying two polynomials by the
same $f$. We will change coordinates to split the second input into
lengths $s+1$ and $s-1$.

For $e\ge1$ and $\deg a,\deg b\le e$, use
\[
 p=a+tb,\qquad h=b-p_{e+1}t^e,\qquad p_d=[t^d]p.
\]
Here $[t^d]p$ denotes the coefficient of $t^d$. We have
$\deg p\le e+1$ and $\deg h\le e-1$, and these coordinates are invertible:
$b=h+p_{e+1}t^e$, $a=p-tb$.
Use them on the input pair with $e=s-1$ and on the output pair with
$e=r+s-2$. Multiplication by $f$ becomes
\[
 (p,h)\longmapsto
 \bigl(fp,\,fh+p_s t^{s-1}(f-f_{r-1}t^{r-1})\bigr).
\]
Scale input $h$ by $\varepsilon^{-1}$ and its output coordinate by
$\varepsilon$. The map is now
\[
 (p,h)\longmapsto
 \bigl(fp,\,fh+\varepsilon p_s t^{s-1}(f-f_{r-1}t^{r-1})\bigr)
 \xrightarrow{\varepsilon\to0}(fp,fh).
\]
The independent polynomials $p,h$ have lengths $s+1,s-1$, respectively.
This gives $C_{XY}(r,s+1)$ and $C_{XY}(r,s-1)$ sharing the first leg,
with disjoint second and third legs. Permuting the legs gives the
other orientations.

\section{The three-sector splitting}\label{app:tripling}

We recap the construction from \cite[Lemma~4.2]{openai-2026-matrix-nine-fourths}.
In $C_{XY}(r,3s+r-1)$, a term has $0\le x<r$ and output index $z=x+y$.
Partition the second-input and output indices into three intervals:
\[
\begin{array}{c|ccc}
 &\text{left}&\text{middle}&\text{right}\\ \hline
 Y &[0,s-1]&[s,2s+r-2]&[2s+r-1,3s+r-2]\\
 Z &[0,s+r-2]&[s+r-1,2s+r-2]&[2s+r-1,3s+2r-3].
\end{array}
\]
Scale middle $Y$ variables by $\varepsilon$ and middle $Z$ variables
by $\varepsilon^{-1}$, leaving the others unchanged.
Since $z=y+x$ with $0\le x<r$, left and right $Y$ indices stay
in the corresponding $Z$ intervals. Cross-sector terms thus connect
middle $Y$ to outer $Z$ and acquire a factor $\varepsilon$;
matched terms remain unchanged. As $\varepsilon\to0$, precisely
three matched sectors sharing $X$ remain.

Translations identify the outer sectors with $C_{XY}(r,s)$. The middle is
\[
 \sum_{u=0}^{r-1}\sum_{v=0}^{s-1}
 x_{r-1-u}y_{s+u+v}z_{s+r-1+v},
\]
which is $C_{XZ}(r,s)$ after reversing its first-leg basis. This
identifies the individual sector. In the fixed-word construction of
\eqref{eq:type-shared}, the sectors acquire independent first-leg
variables before their individual bases are changed.

\section{Finite separation of a shared leg}\label{app:finite-separation}

We adapt the construction of \cite[Proposition~3.1]{openai-2026-matrix-nine-fourths}
with an equal-sums filter, reducing the copy count from $5M$ to $2M$.
It suffices to share leg $X$. Write
\[
 T=\sum_{h=1}^M\sum_{a,b,c}\mu_{habc}x_a y_{hb}z_{hc}.
\]
Take $L=2M$ copies and a primitive $L$-th root of unity $\zeta$.
In copy $\ell=0,\ldots,L-1$, make the substitutions
\[
\begin{aligned}
 x_a^{(\ell)}&\longmapsto\sum_{g=1}^M\zeta^{\ell g}X_{ag},\\
 y_{hb}^{(\ell)}&\longmapsto\sum_{u=1}^M\zeta^{\ell(u-h)}Y_{hbu},\\
 z_{hc}^{(\ell)}&\longmapsto\frac1L\sum_{v=1}^M
                        \zeta^{-\ell v}Z_{hcv}.
\end{aligned}
\]
Summing the copies multiplies each coefficient by
\[
 \frac1L\sum_{\ell=0}^{L-1}\zeta^{\ell(g+u-h-v)}.
\]
The geometric sum is one for exponent divisible by $L$ and zero
otherwise. Since $|g+u-h-v|\le2(M-1)<L$, it retains precisely
$g+u=h+v$.

Now give $X_{ag},Y_{hbu},Z_{hcv}$ respective weights
$g^2$, $2hu-h^2$, and $-2hv$. On every retained term their total is
\[
 g^2+2h(u-v)-h^2=(g-h)^2\ge0.
\]
The weight-zero terms have $g=h$ and $u=v$, so the limit is
\[
 \sum_{h,a,b,c,u}\mu_{habc}X_{ah}Y_{hbu}Z_{hcu}
 =\bigoplus_h(T_h\otimes B_X(M)).
\]
Every leg now distinguishes $h$. Although individual weights may be
negative, the transformed coefficients form a polynomial of degree at
most $(M-1)^2$. This proves \eqref{eq:finite-separation}.

\section{Interpolation of a degeneration}\label{app:degeneration-interpolation}

For completeness, we give the standard interpolation proof
\cite[Theorem~8]{christandl-jensen-zuiddam-2018-nonmultiplicativity}.

\begin{proof}[Proof of Lemma~\ref{lem:degeneration-interpolation}]
Let $P(\varepsilon)$ be the tensor obtained from $T$ by the degeneration's
substitutions. Its coefficients are polynomials in $\varepsilon$ of degree
at most $L$, and $P(0)=S$. Thus $L$ describes the chosen degeneration.
For every $\varepsilon\ne0$, $P(\varepsilon)$ is a restriction of $T$.
Put $Q(\varepsilon)=P(\varepsilon)^{\otimes N}$. Each coefficient of $Q$
is a product of $N$ coefficients of $P$, so its degree is at most $NL$,
and $Q(0)=S^{\otimes N}$.

Choose $NL+1$ distinct nonzero points $\varepsilon_0,\ldots,\varepsilon_{NL}$.
Lagrange interpolation, applied to every coefficient, gives
\[
 S^{\otimes N}=Q(0)=\sum_{j=0}^{NL}c_jQ(\varepsilon_j),\qquad
 c_j=\prod_{\substack{0\le h\le NL\\h\ne j}}
 \frac{-\varepsilon_h}{\varepsilon_j-\varepsilon_h}.
\]
Each $Q(\varepsilon_j)$ is a restriction of $T^{\otimes N}$.
To form the displayed weighted sum from their direct sum, identify the
corresponding variables on all three legs and scale the variables on
one leg of copy $j$ by $c_j$. These are allowed linear substitutions,
proving \eqref{eq:degeneration-interpolation}. Applying a character and
taking $N$-th roots gives
\[
 \lambda(S)\le(NL+1)^{1/N}\lambda(T).
\]
Letting $N\to\infty$ proves $\lambda(S)\le\lambda(T)$.
No continuity assumption on $\lambda$ is needed.
\end{proof}

\section{Replicating a convolution}\label{app:replication}

\begin{proof}[Proof of Lemma~\ref{lem:replication}]
Put $P(t)=t^m-1$. Repeated polynomial division writes the inputs uniquely as
\[
 F=\sum_{a=0}^{r-1}F_aP^a,\qquad
 G=\sum_{b=0}^{s-1}G_bP^b,\qquad \deg F_a,\deg G_b<m.
\]
These are invertible linear coordinates: the basis polynomials $t^dP^a$
have distinct leading degrees $ma+d$. On the output retain the analogous
blocks $H_c$ for $0\le c<r+s-1$. They use $m(r+s-1)$ independent
coordinates, within the original output length $m(r+s)-1$.
Divide each $F_aG_b=Q_{ab}+PR_{ab}$ with $\deg Q_{ab},\deg R_{ab}<m$.
Scale input blocks $a,b$ by $\varepsilon^{-a},\varepsilon^{-b}$ and output
block $c$ by $\varepsilon^c$. The transformed product has
\[
 H_c=\sum_{a+b=c}Q_{ab}
       +\varepsilon\sum_{a+b=c-1}R_{ab}.
\]
Let $\alpha_1,\ldots,\alpha_m$ be the distinct roots of $P$.
Since $Q_{ab}(\alpha_h)=F_a(\alpha_h)G_b(\alpha_h)$, the limit gives
\[
 H_c(\alpha_h)=\sum_{a+b=c}F_a(\alpha_h)G_b(\alpha_h)
 \qquad(1\le h\le m).
\]
Evaluation at $m$ distinct points is invertible on each block of degree
less than $m$. The resulting coordinates therefore give $m$ independent
convolutions of lengths $r,s$.
The evaluation step is the usual diagonalization of cyclic convolution
\cite[Theorem~5 and Section~5.1]{ju-solomonik-2020-convolution}; the displayed
scalings prove the degeneration, with coefficient degree at most one.
\end{proof}
\section{Counting and rescaling}\label{app:counting-rescaling}

We give the complete counting argument and justify the real-length
profiles used in Sections~\ref{sec:discrete}--\ref{sec:homogeneous}.

To prove \eqref{eq:dot-parameters}, fix an integer $q\ge1$ and put
$m=\lfloor N\log_2q\rfloor$. Monotonicity in dimension
and $B_i(q)^{\otimes N}\cong B_i(q^N)$ give
\[
 2^{mp_i}\le\lambda(B_i(q))^N\le2^{(m+1)p_i}.
\]
Take $N$-th roots and let $N\to\infty$. The interval $[0,1]$ follows
from $1\le\lambda(B_i(2))\le2$.

\begin{proof}[Proof of Lemma~\ref{lem:power-sum}]
For every tuple $\sum_h n_h=N$, applying $\lambda$ to
\eqref{eq:type-shared} gives
\[
 2M\lambda(T)^N\ge M^{1+p_i}\prod_h\lambda(T_h)^{n_h}.
\]
Put $a_h=\lambda(T_h)^{1/p_i}$. Cancel $M$ and take the power $1/p_i$:
\[
 M\prod_h a_h^{n_h}\le2^{1/p_i}\lambda(T)^{N/p_i}.
\]
There are at most $(N+1)^s$ multiplicity tuples. Summing and using the multinomial
theorem gives
\[
 \left(\sum_h a_h\right)^N
 \le(N+1)^s2^{1/p_i}\lambda(T)^{N/p_i}.
\]
For fixed $s$, this extra factor is polynomial in $N$. Taking
$N$-th roots and letting $N\to\infty$ proves \eqref{eq:power-sum}.
\end{proof}

\begin{proof}[Proof of Lemma~\ref{lem:discrete}]
The two-sector splitting and Lemma~\ref{lem:power-sum} give, for $s\ge2$,
\[
 2f_{ij}(r,s)^{1/p_i}
 =\lambda(C_{ij}(r,s)\otimes B_i(2))^{1/p_i}
 \ge f_{ij}(r,s+1)^{1/p_i}+f_{ij}(r,s-1)^{1/p_i}.
\]
This is discrete concavity in $s$; input interchange gives the other
coordinate. The three-sector splitting and the same lemma give
\eqref{eq:discrete-tripling}.
\end{proof}

\begin{proof}[Proof of Lemma~\ref{lem:homogeneous}]
Initially let $r,s>0$ be rational, and define
\begin{equation}\label{eq:rational-rescaled-profile}
 g_{ij}(r,s)=
 \sup_{\substack{q\in\mathbb Q_{>0}\\qr,qs\in\mathbb Z_{>0}}}
 \frac{f_{ij}(qr,qs)}{q}.
\end{equation}
Replication gives $f_{ij}(m,m)\ge m$, and monotonicity therefore gives
$f_{ij}(r,s)\ge\min\{r,s\}$ for integer lengths.
The bounds and input interchange follow from \eqref{eq:profile-rank}
and the lower bound above. Replacing $q$ by $q/t$ proves homogeneity.

The crucial point is that finitely many supremum values can be
approached at one common scale. Choose scales $q_h$ approximating
the desired values, and choose $Q$ so that every $Q/q_h$ is a positive
integer. By \eqref{eq:profile-replication},
\[
 \frac{f_{ij}(Qr,Qs)}{Q}
 \ge\frac{f_{ij}(q_hr,q_hs)}{q_h}.
\]
Replacing $Q$ by an integer multiple preserves all these approximations
and can clear any additional rational denominators. Monotonicity now follows by
comparing lengths at a common scale.

For concavity, take $s=\tau s_1+(1-\tau)s_2$, with rational
$0\le\tau\le1$. Approach $g_{ij}(r,s_1)$ and $g_{ij}(r,s_2)$ at
one scale $Q$, with $Qr,Qs_1,Qs_2,Qs$ integers. Discrete concavity gives
\[
 \left(\frac{f_{ij}(Qr,Qs)}{Q}\right)^{1/p_i}
 \ge\tau\left(\frac{f_{ij}(Qr,Qs_1)}{Q}\right)^{1/p_i}
 +(1-\tau)\left(\frac{f_{ij}(Qr,Qs_2)}{Q}\right)^{1/p_i}.
\]
Bound the left side by $g_{ij}(r,s)^{1/p_i}$ and let the approximation
error tend to zero. Input interchange gives the other coordinate.

For tripling, monotonicity first enlarges the discrete source length
$3s+r-1$ to $3s+r$. At a common scale $Q$,
\[
 \left(\frac{f_{ij}(Qr,Q(3s+r))}{Q}\right)^{1/p_i}
 \ge2\left(\frac{f_{ij}(Qr,Qs)}{Q}\right)^{1/p_i}
 +\left(\frac{f_{ik}(Qr,Qs)}{Q}\right)^{1/p_i}.
\]
Approach the two right-hand suprema simultaneously; the left side is
bounded by $g_{ij}(r,3s+r)^{1/p_i}$. This proves
\eqref{eq:general-tripling}.

To extend the profiles to real lengths, we first compare two positive
rational pairs $(r,s)$ and $(r',s')$. Put
\[
 a=\min\left\{\frac{r'}r,\frac{s'}s\right\},\qquad
 b=\max\left\{\frac{r'}r,\frac{s'}s\right\}.
\]
Homogeneity and monotonicity give
\[
 a\,g_{ij}(r,s)\le g_{ij}(r',s')\le b\,g_{ij}(r,s).
\]
Together with $g_{ij}(r,s)\le r+s$, this implies
\begin{equation}\label{eq:profile-continuity}
 |g_{ij}(r',s')-g_{ij}(r,s)|
 \le\frac{r+s}{\min\{r,s\}}
       \max\{|r'-r|,|s'-s|\}.
\end{equation}
The coefficient is bounded on every closed bounded region that stays
away from the coordinate axes. Approximating a positive real pair by
rationals therefore gives a unique continuous extension. All the rational
bounds, homogeneity, monotonicity, input interchange, and powered
concavity extend by continuity. So does \eqref{eq:general-tripling}.

We finally check that this extension agrees with the limit in
\eqref{eq:rescaled-profile}. For positive integers $A,B$, the scale $q=1$ in
\eqref{eq:rational-rescaled-profile} gives $f_{ij}(A,B)\le g_{ij}(A,B)$.
Thus for arbitrary real $r,s>0$,
\[
 \frac{f_{ij}(\lceil mr\rceil,\lceil ms\rceil)}m
 \le g_{ij}\left(\frac{\lceil mr\rceil}m,
                 \frac{\lceil ms\rceil}m\right)
 \longrightarrow g_{ij}(r,s),
\]
using homogeneity and continuity.

For the reverse bound, fix positive rationals $r_0<r$, $s_0<s$ and
an admissible scale $q$ with $qr_0,qs_0$ positive integers.
Put $k=\lfloor m/q\rfloor$. For all sufficiently large $m$, $k\ge1$,
and $kqr_0\le\lceil mr\rceil$, $kqs_0\le\lceil ms\rceil$.
Monotonicity and replication give
\[
 \frac{f_{ij}(\lceil mr\rceil,\lceil ms\rceil)}m
 \ge\frac{k}{m}f_{ij}(qr_0,qs_0).
\]
Since $k/m\to1/q$, taking the lower limit and then the supremum over
$q$ gives a lower bound $g_{ij}(r_0,s_0)$. Let $r_0,s_0$ approach
$r,s$ from below and use continuity. This proves the limit formula
\eqref{eq:rescaled-profile}, and completes the lemma.
\end{proof}

\section{Common slopes and the angular constraint}\label{app:slopes-angular}

Here we justify the common slopes and offsets, then derive the angular
constraint using finite increments rather than derivatives.

\begin{proof}[Proof of Lemma~\ref{lem:intercepts}]
Write $P=P_{ij}$ and $Q=P_{ik}$. We will define
\[
 c_{ij}=\sup_{s>0}\left(\frac{P(s)}{L_i}-s\right),\qquad
 c_{ik}=\sup_{s>0}\left(\frac{Q(s)}{L_i}-s\right)
\]
once the common positive slope $L_i$ has been established. For $s>1$, the nonnegative secant
slope $(P(s)-P(1))/(s-1)$ decreases as $s$ grows, by concavity.
It therefore has a finite limit, which is also $L_P=\lim P(s)/s$.
Likewise define
$L_Q=\lim Q(s)/s$. Dividing the coupled inequalities
\[
 P(3s+1)\ge2P(s)+Q(s),\qquad
 Q(3s+1)\ge2Q(s)+P(s)
\]
by $s$ and passing to infinity gives $L_P\ge L_Q$ and $L_Q\ge L_P$.
Call their common value $L_i$.

Their sum $R=P+Q$ satisfies $R(3s+1)\ge3R(s)$. Iterating at
$s_m=3^m(s+\frac{1}{2})-\frac{1}{2}$ gives
\begin{equation}\label{eq:sum-majorant}
 R(s)\le\lim_{m\to\infty}3^{-m}R(s_m)=L_i(2s+1).
\end{equation}
Since $R(s)>0$, we have $L_i>0$. For $v>u>0$, concavity bounds
the secant slope $(P(v)-P(u))/(v-u)$ below by every later secant slope;
sending its far endpoint to infinity gives
\[
 P(v)-P(u)\ge L_i(v-u).
\]
Letting $u\downarrow0$ and using $P(u)\ge0$ gives
$P(v)\ge L_i v$. Thus $P(s)/L_i-s$ is nonnegative and nondecreasing;
the same holds for $Q$. Their sum is at most one by
\eqref{eq:sum-majorant}. Approaching both suprema at one sufficiently
large $s$ proves $c_{ij}+c_{ik}\le1$. The affine upper bounds
follow from the definition of the offsets. Since the residuals are
nondecreasing and bounded, these suprema also equal their limits as
$s\to\infty$.
\end{proof}

\begin{proof}[Proof of Proposition~\ref{prop:angular-cap}]
If some $p_i=0$, its summand is already $\pi/2$. Assume all three
parameters are positive. Lemma~\ref{lem:intercepts}
gives six offsets and their three bounds
\[
 c_{ij}\ge0,\qquad c_{ij}+c_{ik}\le1.
\]
The increment and affine bounds give
\[
 \frac{P_{ij}(s+t)}{P_{ij}(s)}
 \ge1+\frac{L_it}{P_{ij}(s)}\ge1+\frac{t}{s+c_{ij}}.
\]
Using homogeneity, we obtain
\begin{equation}\label{eq:finite-increment}
 \frac{g_{ij}(r,s+t)}{g_{ij}(r,s)}
 \ge\left(1+\frac{t}{s+c_{ij}r}\right)^{p_i}.
\end{equation}
Increase $r$ by $\varepsilon r$, then $s$ by $\varepsilon s$.
Input interchange supplies the first bound and
\eqref{eq:finite-increment} supplies the second:
\[
\begin{aligned}
 1+\varepsilon
 &=\frac{g_{ij}((1+\varepsilon)r,(1+\varepsilon)s)}{g_{ij}(r,s)}\\
 &\ge\left(1+\frac{\varepsilon r}{r+c_{ji}s}\right)^{p_j}
       \left(1+\frac{\varepsilon s}{s+c_{ij}(1+\varepsilon)r}\right)^{p_i}.
\end{aligned}
\]
Take logarithms, divide by $\varepsilon$, and let
$\varepsilon\downarrow0$.
The elementary limit $\log(1+\varepsilon x)/\varepsilon\to x$ gives
\[
 1\ge\frac{p_i s}{s+c_{ij}r}+\frac{p_j r}{r+c_{ji}s}
 \qquad(r,s>0).
\]
With $t=r/s$, this is equivalent to
\[
 (1-p_j)c_{ij}t^2+
 (1-p_i-p_j+c_{ij}c_{ji})t+(1-p_i)c_{ji}\ge0
 \qquad(t>0).
\]
A quadratic $At^2+Bt+C$ with $A,C\ge0$ is nonnegative on $t>0$
exactly when $B\ge-2\sqrt{AC}$. Indeed, divide by $t$ and use
$\inf_{t>0}(At+C/t)=2\sqrt{AC}$; if $A$ or $C$ is zero, the
infimum is zero, approached at an endpoint. Thus
writing $x=\sqrt{c_{ij}c_{ji}}$ gives
$\bigl(x+\sqrt{(1-p_i)(1-p_j)}\bigr)^2\ge p_ip_j$, or
\[
 \sqrt{c_{ij}c_{ji}}\ge
 \left(\sqrt{p_ip_j}-\sqrt{(1-p_i)(1-p_j)}\right)_+,
\]
where $(z)_+=\max\{z,0\}$.

For any nonnegative $x_1,x_2,x_3$, AM--GM and
\eqref{eq:intercept-budgets} give
\[
 \sum_{i<j}2\sqrt{c_{ij}c_{ji}}x_ix_j
 \le\sum_{i<j}(c_{ij}x_i^2+c_{ji}x_j^2)
 \le\sum_i x_i^2.
\]
Put $\theta_i=\arccos\sqrt{p_i}$ and suppose
$S=\theta_1+\theta_2+\theta_3<\pi/2$. Take $x_i=\cos\theta_i>0$.
By \eqref{eq:intercept-products},
$\sqrt{c_{ij}c_{ji}}\ge\cos(\theta_i+\theta_j)$.
For the other two indices $j,k$, the product-to-sum identities give
\[
\begin{aligned}
 \cos(\theta_i+\theta_j)x_j+\cos(\theta_i+\theta_k)x_k
 &=\cos\theta_i+
   \frac{\cos(\theta_i+2\theta_j)+\cos(\theta_i+2\theta_k)}{2}\\
 &=x_i+\cos S\cos(\theta_j-\theta_k)>x_i.
\end{aligned}
\]
Both factors of the extra term are positive because $S<\pi/2$ and
$|\theta_j-\theta_k|\le S$.
Multiply by $x_i$ and sum over $i$:
\[
 \sum_{i<j}2\sqrt{c_{ij}c_{ji}}x_ix_j
 \ge\sum_{i<j}2\cos(\theta_i+\theta_j)x_ix_j
 >\sum_i x_i^2,
\]
contradicting \eqref{eq:angular-quadratic-form}.
\end{proof}
\end{document}